\journal{Journal of Multivariate Analysis}
\let\oldproofname=\proofname
\renewcommand{\proofname}{\rm\bf{\oldproofname}}
\definecolor{lightgrey}{rgb}{0.9,0.9,0.9}
\definecolor{darkgreen}{rgb}{0,0.3,0}
\newtheorem{thm}{Theorem}
\newtheorem{lemma}{Lemma}
\newtheorem{prop}{Proposition}
\newtheorem{cor}{Corollary}
		\newcommand{\order}[1]{{\cal O}\hspace{-0.2em}\left( #1 \right)}
		\definecolor{trevorblue}{rgb}{0.330, 0.484, 0.828}
		\definecolor{trevoryellow}{rgb}{0.829, 0.680, 0.306}
\begin{document}

\begin{frontmatter}

\title{Generating MCMC proposals by randomly rotating the regular simplex}

\author{Andrew J.~Holbrook\corref{mycorrespondingauthor}}

\address{Department of Biostatistics, University of California, Los Angeles, CA 90095, USA}

\cortext[mycorrespondingauthor]{Email address: \url{aholbroo@g.ucla.edu}}

\begin{abstract}
We present the \emph{simplicial sampler}, a class of parallel MCMC methods that generate and choose from multiple proposals at each iteration.  The algorithm's multiproposal randomly rotates a simplex connected to the current Markov chain state in a way that inherently preserves symmetry between proposals. As a result, the simplicial sampler leads to a simplified acceptance step: it simply chooses from among the simplex nodes with probability proportional to their target density values. We also investigate a multivariate Gaussian-based symmetric multiproposal mechanism and prove that it also enjoys the same simplified acceptance step. This insight leads to significant theoretical and practical speedups.  While both algorithms enjoy natural parallelizability, we show that conventional implementations are sufficient to confer efficiency gains across an array of dimensions and a number of target distributions. 
\end{abstract}

\begin{keyword} 
Haar measure \sep
Markov chain Monte Carlo \sep
Orthogonal group \sep
Parallel MCMC.
\MSC[2020] Primary 65C05 \sep
Secondary 65C60
\end{keyword}

\end{frontmatter}

\section{Introduction}\label{sec:intro}

\newcommand{\x}{\mathbf{x}}
\renewcommand{\v}{\mathbf{v}}
\newcommand{\TTheta}{\boldsymbol{\theta}}
\newcommand{\TTTheta}{\boldsymbol{\Theta}}
\newcommand{\tr}{\mbox{tr}}
\newcommand{\X}{\mathbf{X}}
\renewcommand{\u}{\mathbf{u}}
\newcommand{\QQ}{\mathbf{Q}}
\newcommand{\haar}{\mathcal{H}}
\newcommand{\orthog}{\mathcal{O}}
\newcommand{\SSigma}{\boldsymbol{\Sigma}}

Generating random samples from probability distributions is a crucial task in many of the quantitative sciences. Markov chain Monte Carlo (MCMC) algorithms sample from target distributions by constructing Markov chains that maintain their targets as equilibrium distributions. The classic Metropolis-Hastings (MH) algorithm \citep{metropolis1953equation,hastings1970monte} builds such a Markov chain by iteratively generating a single proposal state conditioned on the current state and randomly accepting or rejecting that proposal based on a ratio involving the target and proposal densities evaluated at the current and proposed states.  The majority of MCMC techniques share this general structure, randomly proposing and accepting a single proposal at each iteration, but some relatively recent methods make use of multiple proposals (or, multiproposals) at each step with the goals of, e.g., reducing autocorrelation between Markov chain states or making better use of parallel computing resources.

\citet{liu2000multiple} develops the first such multiproposal technique, the multiple-try Metropolis algorithm (MTM). Each iteration of MTM starts by proposing multiple states $\TTheta_p^*$, $p\in\{1,\dots,P\}$ from a proposal distribution $q(\TTheta^{(s)},\cdot)$ and choosing a candidate $\TTheta^*$ from among them with probability proportional to their respective target density values $\pi(\TTheta^*_p)$.  After this, the MTM algorithm generates an additional $P-1$ states $\TTheta_1',\dots,\TTheta_{P-1}'$ from the distribution $q(\TTheta^{*},\cdot)$.  In the case that $q(\cdot,\cdot)$ is symmetric, the MTM algorithm then accepts $\TTheta^*$ with probability
\begin{align}\label{eq:mtmMetropolis}
	1   \wedge \frac{\pi(\TTheta^*_1) + \cdots + \pi(\TTheta^*_P) }{\pi(\TTheta'_1) + \cdots + \pi(\TTheta'_{P-1}) + \pi(\TTheta^{(s)} ) } \, .
\end{align}
Whereas MTM is the most widely used multiproposal MCMC algorithm, it has certain drawbacks. First, to use $P$ proposals, one must generate $O(2P)$ states and evaluate the target density $\pi(\cdot)$ $O(2P)$ times. Second, after randomly selecting a potentially good candidate state $\TTheta^*$, one might nonetheless reject it based on the target density values of the other proposed states within the Metropolis step that uses (\ref{eq:mtmMetropolis}). Third, \citet{yang2021convergence} rigorously prove superior mixing of MH over MTM albeit in the simplified setting of independence proposals.  Perhaps most importantly, \citet{tjelmeland2004using} shows that MTM's additional Metropolis step and its extra noise are simply not necessary.

Next, \citet{tjelmeland2004using} and \citet{frenkel2004speed} develop multiproposal MCMC algorithms that do not require the same binary accept-reject step as MTM. \citet{tjelmeland2004using} posits a collection of $P+1$ random variables $\TTTheta^*=(\TTheta_1^*,\dots,\TTheta^*_{P+1})$ and a probability distribution with density $q(\TTheta_p^*,\TTTheta^*)$ that describes the probability of proposing the entire set $\TTTheta^*$ given $\TTheta_p^*$.  \citet{tjelmeland2004using} augments the state space with a random variable $\frak{p} \in \{1,\dots,P+1\}$ indicating $\TTheta_\frak{p}$ as the current state of the Markov chain and iterates between generating $P$ proposals conditioned on the current state $\TTheta^{(s)}=\TTheta^*_{P+1}$ and randomly selecting a new value for $\frak{p}$ with probabilities
\begin{align}\label{eq:probsTjel}
	\left( \frac{\pi(\TTheta^*_1)q(\TTheta^*_1,\TTTheta^*)}{ \sum_{p=1}^{P+1}\pi(\TTheta^*_p)q(\TTheta^*_p,\TTTheta^*)}, \frac{\pi(\TTheta^*_2)q(\TTheta^*_2,\TTTheta^*)}{ \sum_{p=1}^{P+1}\pi(\TTheta^*_p)q(\TTheta^*_p,\TTTheta^*)}, \dots, \frac{\pi(\TTheta^*_{P+1})q(\TTheta^*_{P+1},\TTTheta^*)}{\sum_{p=1}^{P+1}\pi(\TTheta^*_p)q(\TTheta^*_p,\TTTheta^*)}\right) \, .
\end{align}
In addition to these probabilities, \citet{tjelmeland2004using} advances two further transition strategies and two structured proposal strategies, which we discuss below.  Finally, he proposes building estimators using a weighted average with weights equal to the probabilities (\ref{eq:probsTjel}) and proves that this estimator is consistent assuming the chain converges to equilibrium.  \citet{frenkel2004speed} also uses the probabilities (\ref{eq:probsTjel}) as weights to build  the same estimator but in the context of statistical physics.  \citet{frenkel2004speed} provides an informal physics argument for the correctness of his approach, stating that it fails to maintain detailed balance but does maintain `superdetailed' balance. 
More recently, \citet{calderhead2014general} adds an additional subroutine to the algorithm of \citet{tjelmeland2004using} so that, at each MCMC iteration, one simulates a random walk on the finite state space composed of the $P+1$ states using transition probabilities closely related to (\ref{eq:probsTjel}).
Ultimately, however, \citet{yang2018parallelizable} and \citet{schwedes2021rao} show that the weighted estimators of \citet{tjelmeland2004using} and \citet{frenkel2004speed} constitute Rao-Blackwellizations over the index variable $\frak{p}$ within the framework of \citet{calderhead2014general} and thus enjoy reduced variance.

Despite theoretical results favoring the use of weighted estimators that feature all proposals at each step, there are practical downsides to this approach.  Here, one must: either (1) save a massive $S\times D \times (P+1)$ tensor to memory, where $S$ is the total number of MCMC iterations; or (2) know what the estimand of interest is beforehand and maintain a running average of each MCMC iteration's contribution to the weighted estimator.
Importantly, \citet{schwedes2021rao} empirically demonstrate that the non-Rao-Blackwellized estimators of \citet{calderhead2014general} can perform poorly when a small number of jumps occur between the $P+1$ states of the proposal set.  On the one hand, this result is disconcerting because their experimental baseline of $1\times P$ jumps for $P=1000$ proposals performs much worse than $16 \times P$ jumps for the same number of proposals. Time spent iterating between proposed states is time not spent exploring the target distribution. On the other hand, these empirical results rely on a specific multiproposal scheme in which
\begin{align}\label{eq:pInd}
	\TTheta^*_1,\dots,\TTheta^*_P \stackrel{iid}{\sim} N_D(\TTheta_{P+1}^*,\SSigma) 
\end{align}
and do not necessarily extend to other multiproposal strategies.

With this in mind, we consider it worthwhile to further develop structured multiproposal methods that maintain scalability while not requiring many jumps within the proposal set. As mentioned above, \citet{tjelmeland2004using} proposes two strategies.  Proposal alternative 1, or `P1' , first generates a Gaussian random variable $\TTheta_0$ centered at the current state $\TTheta^*_{P+1}$ and next generates $P$ proposals centered at $\TTheta_0$:
\begin{align}\label{eq:p1}
	\TTheta^*_1,\dots,\TTheta^*_P \stackrel{iid}{\sim} N_D(\TTheta_{0},\SSigma) \, , \quad \TTheta_{0} \sim N_D(\TTheta_{P+1}^*,\SSigma) \, .
\end{align}
Here, the two $D$-dimensional covariance matrices take the same form so that all $\TTTheta^*=(\TTheta_1^*,\dots,\TTheta_{P+1}^*)$ share the same expected distance from the center $\TTheta_{0}$.  In some ways, the recent work of \citet{luo2019multiple} builds on this proposal strategy.  They extend this multiproposal's simple star-shaped structure, where $\TTheta_{0}$ is the center and $(\TTheta_1^*,\dots,\TTheta_{P+1}^*)$ are the flairs, to general acyclic graphs and tailor their proposals to models with certain discrete structures.  
Unlike his first proposal alternative, \citet{tjelmeland2004using}'s  Proposal alternative 2, or `P2', only applies to the scenario when $P=2$ and constructs proposal sets $\TTTheta^*=(\TTheta_1^*,\TTheta_2^*,\TTheta_3^*)$ that are equidistant from a center $\TTheta_{0}$ and `maximally spread'.  Setting $\TTheta_3^*$ to equal $\TTheta^{(s)}$, the current state of the Markov chain, this scheme satisfies the following relations:
\begin{gather}\label{eq:p2}
	\TTheta^*_1 =  \TTheta_{0} + n\v_1 \, , \quad \TTheta^*_2 =  \TTheta_{0} - \frac{n}{2} \v_1 + \frac{\sqrt{3}n}{2}  \v_2\, , \quad \TTheta_0 =  \TTheta_3^* +  \frac{n}{2} \v_1 + \frac{\sqrt{3}n}{2}  \v_2 \, ,\\ \nonumber
	\v_1 = \u_1 \, , \quad \v_2 = \frac{\left(\mathbf{I}_D - \u_1 \u_1^T\right) \u_2}{\lVert \left(\mathbf{I}_D - \u_1 \u_1^T\right) \u_2\rVert_2} \, , \quad
	n \sim N\left(0,\sigma^2/3\right) \, , \quad \u_1, \, \u_2 \stackrel{iid}{\sim} \mbox{Uniform}(\mathcal{S}^{D-1}) \, .
\end{gather}
In words, one first generates two independent random variables that are each uniformly distributed on the sphere and forces them to be orthogonal to each other.  Next, one generates the normally distributed $n$ and obtains $\TTheta_{0}$ followed by $\TTheta_1^*$ and $\TTheta_2^*$.  Straightforward calculations show that the proposals $\TTheta_1^*$, $\TTheta_2^*$ and $\TTheta_3^*$ are indeed the same distance from the center $\TTheta_{0}$. Similar calculations show that the proposals $\TTheta_1^*$, $\TTheta_2^*$ and $\TTheta_3^*$ are all equidistant from each other and together constitute an equilateral triangle with random edge lengths.

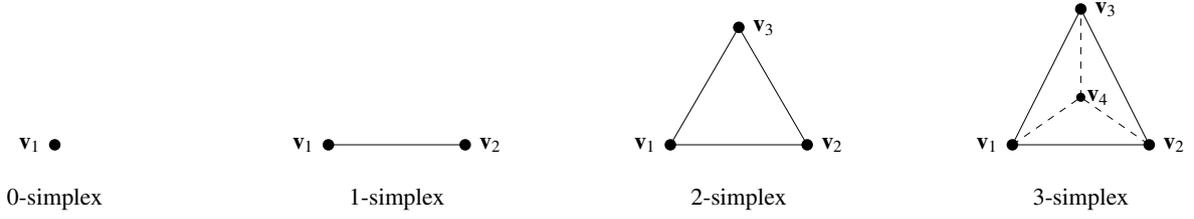
\begin{figure}[!t]
	\centering
	\scalebox{0.9}{
		\begin{tikzpicture}
			\tikzstyle{point}=[thick,draw=black,fill=black,shape=circle,inner sep=0pt,minimum width=4pt,minimum height=4pt]
			\tikzstyle{pointSmall}=[thick,draw=black,fill=black,shape=circle,inner sep=0pt,minimum width=3pt,minimum height=3pt]
			
			\node (l)[point,label={[label distance=0cm]180:$\v_1$}] at (-9,0) {};
			\node (aa)[label={0-simplex}] at (-9,-1.2) {};
			
			\node (q)[point,label={[label distance=0cm]180:$\v_1$}] at (-5,0) {};
			\node (r)[point,label={[label distance=0cm]0:$\v_2$}] at (-3,0) {};
			\node (aa)[label={1-simplex}] at (-4,-1.2) {};
			\draw (q.center) -- (r.center);

			\node (x)[point,label={[label distance=0cm]180:$\v_1$}] at (0,0) {};
			\node (y)[point,label={[label distance=0cm]0:$\v_2$}] at (2,0) {};
			\node (z)[point,label={[label distance=0cm]0:$\v_3$}] at (1,1.73) {};
			\node (aa)[label={2-simplex}] at (1,-1.2) {};
			\draw (x.center) -- (y.center) -- (z.center) -- cycle;
			
			\node (a)[point,label={[label distance=0cm]180:$\v_1$}] at (5,0) {};
			\node (b)[point,label={[label distance=0cm]0:$\v_2$}] at (7,0) {};
			\node (c)[point,label={[label distance=0cm]0:$\v_3$}] at (6,2) {};
			\node (d)[pointSmall,label={[label distance=-0.1cm]0:$\v_4$}] at (6,0.7) {};
			\draw (a.center) -- (b.center) -- (c.center) -- cycle;
			\draw[dashed] (a.center) -- (d.center) -- (b.center);
			\draw[dashed] (d.center) -- (c.center);
			\node (aa)[label={3-simplex}] at (6,-1.2) {};
		\end{tikzpicture}
	}
	\caption{Regular simplices. The 0-simplex is a point; the 1-simplex is a line connecting two points; the regular 2-simplex is an equilateral triangle; and the regular 3-simplex is a tetrahedron, the faces of which are equilateral triangles.}\label{fig:regSimps}
\end{figure}

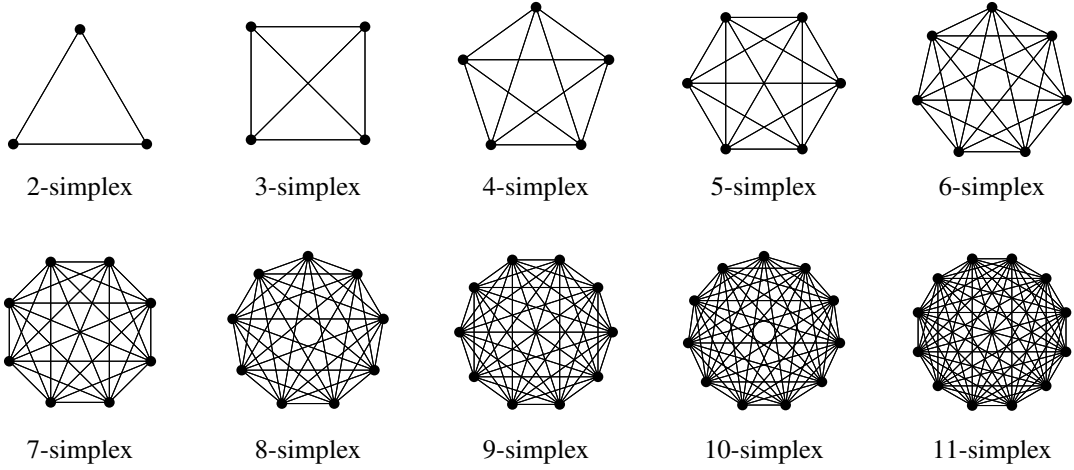
\begin{figure}[!t]
	\centering
	\scalebox{1}{
		\begin{tikzpicture}
			\tikzstyle{point}=[thick,draw=black,fill=black,shape=circle,inner sep=0pt,minimum width=4pt,minimum height=4pt]

			\node (a)[draw=none,minimum size=2cm,regular polygon,regular polygon sides=3] at (0,0) {};
			\foreach \x in {1,2,3}
			\fill (a.corner \x) circle[radius=2pt];
			\foreach \x in {1,2,3}
			\foreach \y in {1,2,3}
			\draw (a.corner \x) -- (a.corner \y);
			\node (a)[label={2-simplex}] at (0,-1.5) {};
			
			\node (a)[draw=none,minimum size=2.1cm,regular polygon,regular polygon sides=4] at (3,0.3) {};
			\foreach \x in {1,...,4}
			\fill (a.corner \x) circle[radius=2pt];
			\foreach \x in {1,...,4}
			\foreach \y in {1,...,4}
			\draw (a.corner \x) -- (a.corner \y);
			\node (a)[label={3-simplex}] at (3,-1.5) {};
			
			\node (a)[draw=none,minimum size=2cm,regular polygon,regular polygon sides=5] at (6,0.3) {};
			\foreach \x in {1,...,5}
			\fill (a.corner \x) circle[radius=2pt];
			\foreach \x in {1,...,5}
			\foreach \y in {1,...,5}
			\draw (a.corner \x) -- (a.corner \y);
			\node (a)[label={4-simplex}] at (6,-1.5) {};
			
			\node (a)[draw=none,minimum size=2cm,regular polygon,regular polygon sides=6] at (9,0.3) {};
			\foreach \x in {1,...,6}
			\fill (a.corner \x) circle[radius=2pt];
			\foreach \x in {1,...,6}
			\foreach \y in {1,...,6}
			\draw (a.corner \x) -- (a.corner \y);
			\node (a)[label={5-simplex}] at (9,-1.5) {};
			
			\node (a)[draw=none,minimum size=2cm,regular polygon,regular polygon sides=7] at (12,0.3) {};
			\foreach \x in {1,...,7}
			\fill (a.corner \x) circle[radius=2pt];
			\foreach \x in {1,...,7}
			\foreach \y in {1,...,7}
			\draw (a.corner \x) -- (a.corner \y);
			\node (a)[label={6-simplex}] at (12,-1.5) {};
			
			\node (a)[draw=none,minimum size=2cm,regular polygon,regular polygon sides=8] at (0,-3) {};
			\foreach \x in {1,...,8}
			\fill (a.corner \x) circle[radius=2pt];
			\foreach \x in {1,...,8}
			\foreach \y in {1,...,8}
			\draw (a.corner \x) -- (a.corner \y);
			\node (a)[label={7-simplex}] at (0,-5) {};
			
			\node (a)[draw=none,minimum size=2cm,regular polygon,regular polygon sides=9] at (3,-3) {};
			\foreach \x in {1,...,9}
			\fill (a.corner \x) circle[radius=2pt];
			\foreach \x in {1,...,9}
			\foreach \y in {1,...,9}
			\draw (a.corner \x) -- (a.corner \y);
			\node (a)[label={8-simplex}] at (3,-5) {};
			
			\node (a)[draw=none,minimum size=2cm,regular polygon,regular polygon sides=10] at (6,-3) {};
			\foreach \x in {1,...,10}
			\fill (a.corner \x) circle[radius=2pt];
			\foreach \x in {1,...,10}
			\foreach \y in {1,...,10}
			\draw (a.corner \x) -- (a.corner \y);
			\node (a)[label={9-simplex}] at (6,-5) {};
			
			\node (a)[draw=none,minimum size=2cm,regular polygon,regular polygon sides=11] at (9,-3) {};
			\foreach \x in {1,...,11}
			\fill (a.corner \x) circle[radius=2pt];
			\foreach \x in {1,...,11}
			\foreach \y in {1,...,11}
			\draw (a.corner \x) -- (a.corner \y);
			\node (a)[label={10-simplex}] at (9,-5) {};
			
			\node (a)[draw=none,minimum size=2cm,regular polygon,regular polygon sides=12] at (12,-3) {};
			\foreach \x in {1,...,12}
			\fill (a.corner \x) circle[radius=2pt];
			\foreach \x in {1,...,12}
			\foreach \y in {1,...,12}
			\draw (a.corner \x) -- (a.corner \y);
			\node (a)[label={11-simplex}] at (12,-5) {};

		\end{tikzpicture}
	}
	\caption{Petrie polygon projections \citep{coxeter1973regular} of higher-dimensional, regular simplices preserve symmetry but not distances.}\label{fig:petries}
\end{figure}

\citet{tjelmeland2004using} does not specify how one might extend his second proposal alternative to accommodate multiproposals when $P>2$. In the following, we accomplish this multivariate generalization by first noting that the equilateral triangle is a regular 2-simplex.  Generally, a $D$-dimensional space can hold as a many as $D+1$ points that are pairwise equidistant. If $\tilde{D}\leq D+1$, the points $\v_1, \dots, \v_{\tilde{D}}\in \mathbb{R}^D$ that satisfy
\begin{align*}
	||\v_d-\v_{d'}||_2 = \begin{cases} 
		\lambda>0\,, & d \neq d'\,,\\
		0\,, & d = d' \,,
	\end{cases}
\end{align*}
identify a regular polytope called a $\tilde{D}$-simplex. Fig.~\ref{fig:regSimps} shows the regular simplices that fit in 3-dimensional space.  Fig.~\ref{fig:petries} shows Petrie polygon projections \citep{coxeter1973regular} that are helpful for visualizing higher dimensional regular simplices.  While we use the regular simplex as our starting point, the multistep nature of (\ref{eq:p2}) makes it unclear how one might actually generate proposals when $P>2$, an endeavor that doubtless involves more random variables with more complicated relationships. To preserve simplicity in this multivariate setting, in Section \ref{sec:simp} we generate proposals using a matrix distribution over the orthogonal group $\orthog_D$, which consists of $D \times D$ matrices $\QQ$ satisfying $\QQ^T\QQ=\mathbf{I}_D$.  Because $\orthog_D$ is a compact topological group, it admits a unique normalized, left-invariant measure $\haar(\cdot)$ called the Haar measure \citep{folland2016course}. In symbols,
\begin{align*}
	\haar(\orthog_D)=1 \, , \quad  \haar \left(\QQ \mathcal{B} \right) = \haar \left( \mathcal{B}\right) 
\end{align*}
for $\QQ\in \orthog_D$ and $\mathcal{B}$ a Borel set on $\orthog_D$. The Haar measure on $\orthog_D$ is not only useful for constructing simple multivariate proposals: its uniformity helps maintain a simplified acceptance step as well (Theorem \ref{lem:symm}).

Bearing in mind the poor performance of simple multiproposal strategies when the number of jumps between proposals is small \citep{schwedes2021rao}, we are motivated to develop structured multiproposal strategies that do not require Rao-Blackwellization, or many inter-proposal jumps, and thus spend more time exploring the entire target distribution.  We therefore develop a multivariate multiproposal framework that maintains the principles of \citet{tjelmeland2004using}'s second proposal strategy while also achieving greater simplicity.  In the following, we develop the simplicial sampler and its extensions, prove algorithmic correctness and  demonstrate usefulness of our approach on a range of target distributions.  We start by deriving sufficient conditions for simplifying the acceptance probabilities (\ref{eq:probsTjel}).

\section{Simplified acceptance probabilities}

\newcommand{\dd}{\mbox{d}}

We wish to sample from the target distribution with density $\pi(\TTheta)$ for $\TTheta \in \mathbb{R}^D$.  Given the current Markov chain state $\TTheta^{(s)}$, we use the following steps to advance the Markov chain.
\begin{enumerate}
	\item Propose states $\TTTheta^*=(\TTheta^*_1, \dots, \TTheta^*_P,\TTheta^{*}_{P+1})=(\TTheta^*_1, \dots, \TTheta^*_P,\TTheta^{(s)})$  according to the distribution $q(\TTheta^{(s)},\TTTheta^*)$;
	\item Calculate densities $\pi(\TTheta^*_p)$ for $p \in \{1,\dots,P+1\}$;
	\item Randomly select a single state $\TTheta^*_p$ from the set $\TTTheta^*$ according to the simplified selection probabilities
	\begin{align}\label{eq:simpProbs}
		\left(\pi_1(\TTTheta^*), \dots, \pi_{P+1}(\TTTheta^*)\right)=	\left( \frac{\pi(\TTheta^*_1)}{ \sum_{p=1}^{P+1}\pi(\TTheta^*_p)}, \dots, \frac{\pi(\TTheta^*_{P+1})}{\sum_{p=1}^{P+1}\pi(\TTheta^*_p)}\right) \, ;
	\end{align}
	\item Set $\TTheta^{(s+1)} = \TTheta^*_{p}$.
\end{enumerate}
This transition rule uses a modified acceptance probability (\ref{eq:probsTjel}) that does not include the proposal density terms. As such, it rarely leaves the target distribution invariant. We require a further sufficient condition to guarantee that $\pi(\TTheta)$ is a stationary distribution for our Markov chain.
\begin{prop}[\citealt{tjelmeland2004using}]\label{prop:db}
	The Markov chain with the above transition rule maintains detailed balance with respect to the target distribution $\pi(\TTheta)$ if $q(\cdot,\cdot)$ satisfies
	\begin{align}\label{eq:criterion}
		q(\TTheta^*_1, \TTTheta^*) = \dots = q(\TTheta^*_p,  \TTTheta^*)= \dots = q(\TTheta^*_{P+1}, \TTTheta^*) \, ,
	\end{align}
	and so $\pi(\TTheta)$ is a stationary distribution of the chain.
\end{prop}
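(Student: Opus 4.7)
The plan is to verify the detailed balance condition $\pi(\TTheta)\,K(\TTheta,\TTheta') = \pi(\TTheta')\,K(\TTheta',\TTheta)$ for the one-step transition kernel $K$, since this is sufficient for $\pi$ to be a stationary distribution of the chain.

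First I would write $K(\TTheta, \TTheta')$ explicitly by conditioning on the entire proposal tuple $\TTTheta^*$ and the selected index $p$, combining steps 1--3 of the algorithm. The resulting expression is a sum over $p$ of integrals, each with integrand $q(\TTheta,\TTTheta^*)\,\pi(\TTheta^*_p)/\sum_k\pi(\TTheta^*_k)$ against a Dirac mass enforcing $\TTheta^*_p = \TTheta'$. By construction, the support of $q(\TTheta,\cdot)$ consists of tuples with $\TTheta^*_{P+1} = \TTheta$.

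The core idea is then to multiply by $\pi(\TTheta)$ and exploit the fact that $\pi(\TTheta) = \pi(\TTheta^*_{P+1})$ on the support of $q$. The integrand now contains the factor $\pi(\TTheta^*_{P+1})\pi(\TTheta^*_p)/\sum_k\pi(\TTheta^*_k)$, which is manifestly invariant under swapping positions $p$ and $P+1$ in the tuple. Condition~(\ref{eq:criterion}) supplies the complementary symmetry for the remaining piece: $q(\TTheta,\TTTheta^*) = q(\TTheta^*_{P+1},\TTTheta^*) = q(\TTheta^*_p,\TTTheta^*)$. After relabeling the integration variable by the transposition $p \leftrightarrow P+1$, the $q$-factor is pinned at $\TTheta^*_{P+1} = \TTheta'$ while the selection delta pins $\TTheta^*_p = \TTheta$, and term-by-term the sum coincides with the integral representation of $\pi(\TTheta')\,K(\TTheta',\TTheta)$.

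The main obstacle is bookkeeping rather than anything conceptual. The algorithmic convention of placing the current state at position $P+1$ is arbitrary, and condition~(\ref{eq:criterion}) is precisely the statement that the proposal mechanism is blind to which element of the tuple served as the starting point. Similarly, the selection probabilities $\pi(\TTheta^*_p)/\sum_k\pi(\TTheta^*_k)$ already incorporate the ratio of target densities that a standard Metropolis--Hastings correction would have otherwise required. Once the index relabeling is tracked carefully, detailed balance drops out with no genuine calculation, and stationarity of $\pi$ follows at once.
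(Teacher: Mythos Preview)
Your proposal is correct and follows essentially the same approach as the paper: both hinge on the identity $\pi(\TTheta^*_{P+1})\,q(\TTheta^*_{P+1},\TTTheta^*)\,\pi(\TTheta^*_p)=\pi(\TTheta^*_p)\,q(\TTheta^*_p,\TTTheta^*)\,\pi(\TTheta^*_{P+1})$, which is exactly what the symmetry condition~\eqref{eq:criterion} supplies. The only difference is presentational: the paper argues conditionally on a fixed proposal tuple $\TTTheta^*$ and verifies detailed balance pointwise, whereas you integrate over all tuples and track the Dirac constraints explicitly; the two are equivalent.
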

\begin{proof}
	Let $p(\TTheta,\dd \TTheta)$ be the transition kernel associated with the transition from an arbitrary state $\TTheta$ according to the above steps 1-4. Then
	\begin{align*}
		p(\TTheta,\dd \TTheta) = \sum_{p=1}^{P+1} \int_{(\mathbb{R}^{D})^{\otimes P}} \pi_p (\TTTheta)\, \delta_{\TTheta_p} (\dd \TTheta) \,  q(\TTheta,\TTTheta)\, \dd \TTheta_1 \dots \dd \TTheta_{P} \, .
	\end{align*}
	For  any two states $\TTheta$, $\widetilde{\TTheta}$, the following holds:
	\begin{align*}
		\pi(\TTheta)\dd\TTheta\, p(\TTheta,\dd\widetilde{\TTheta}) 
		&= \pi(\TTheta)\dd\TTheta \sum_{p=1}^{P+1} \int_{(\mathbb{R}^{D})^{\otimes P}} \pi_p (\TTTheta^*)\, \delta_{\TTheta_p^*} (\dd \widetilde{\TTheta}) \,  q(\TTheta,\TTTheta^*)\, \dd \TTheta_1^* \dots \dd \TTheta_{P}^* \\ \nonumber
		&= \pi(\TTheta)\dd\TTheta  \sum_{p=1}^{P+1} \int_{(\mathbb{R}^{D})^{\otimes P}}\frac{ \pi(\widetilde{\TTheta})}{\sum_{p'=1}^{P+1}\pi(\TTheta^*_{p'})} \, \dd \widetilde{\TTheta} \,  q(\TTheta,\TTTheta^*)\, \prod_{p'\neq p} \dd \TTheta_{p'}^* \\ \nonumber
		&= \pi(\widetilde{\TTheta})\dd\widetilde{\TTheta} \sum_{p=1}^{P+1} \int_{(\mathbb{R}^{D})^{\otimes P}}\frac{ \pi(\TTheta)}{\sum_{p'=1}^{P+1}\pi(\TTheta^*_{p'})} \, \dd \TTheta \,  q(\widetilde{\TTheta},\TTTheta^*)\, \prod_{p'\neq p} \dd \TTheta_{p'}^*  \\ \nonumber
		&= \pi(\widetilde{\TTheta})\dd\widetilde{\TTheta} \sum_{p=1}^{P+1} \int_{(\mathbb{R}^{D})^{\otimes P}} \pi_p (\TTTheta^*)\, \delta_{\TTheta_p^*} (\dd \TTheta) \,  q(\widetilde{\TTheta},\TTTheta^*)\, \dd \TTheta_1^* \dots \dd \TTheta_{P}^*  \\ \nonumber
		&= \pi(\widetilde{\TTheta})\dd \widetilde{\TTheta}\, p(\widetilde{\TTheta},\dd \TTheta) \, ,
	\end{align*}	
	where we use assumption (\ref{eq:criterion}) in the third line.
\end{proof}
Although the proof of Proposition \ref{prop:db} is new, the result itself is a special case of a result from \citet{tjelmeland2004using}, who argues that his method constitutes a Gibbs sampler that leaves the joint and marginal distributions of $\pi(\TTheta) q(\TTheta,\TTTheta^*)$ invariant.  Proposition \ref{prop:db} underlines what one may readily gather from the acceptance probabilities (\ref{eq:probsTjel}). Namely, it may sometimes prove useful to design multiproposals that maintain criterion (\ref{eq:criterion}) insofar as such proposals obviate the need for computing proposal densities.
	In the following, we show that the Gaussian centered multiproposal (\ref{eq:p1}) provides just such a multiproposal mechanism.
Under this scheme, consider the $DP$-vector $\TTheta^*_{-p}:= \mbox{vec}(\TTTheta^*_{-p})$ obtained by removing the state $\TTheta_p^*$, or the $p$th column, from the proposal set $\TTTheta^*$ and applying the vectorization operator. Integrating over all possible $\TTheta_{0}$ gives the conditional distribution of $\TTheta_{-p}^*$ given $\TTheta^*_{p}$:
\begin{align*}
	\TTheta_{-p}^* \sim N_{DP} \left(  \mathbf{1}_P \otimes \TTheta^*_{p} = 
	\begin{pmatrix}
		\TTheta_{p}^* \\
		\TTheta_{p}^* \\
		\vdots \\
		\TTheta_{p}^* 
	\end{pmatrix},\; \left(\mathbf{1}_P\mathbf{1}_P^T + \mathbf{I}_P \right) \otimes  \SSigma = 
	\begin{pmatrix}
		2 \SSigma & \SSigma & \cdots & \cdots & \SSigma \\
		\SSigma  & 2 \SSigma & \SSigma & \cdots & \SSigma \\
		\vdots   &  \SSigma & 2\SSigma & & \vdots \\
		\vdots   &  \vdots     & &\ddots & \SSigma \\
		\SSigma &   \SSigma    & \cdots & \SSigma& 2 \SSigma
	\end{pmatrix}
	\right) \, .
\end{align*}
If one does not adapt the $D \times D$ proposal covariance $\SSigma$, then one may precompute the $DP \times DP$ covariance matrix inverse in $O(D^3P^3)$ time and store it using $O(D^3P^3)$ memory.  In high-dimensional settings where such storage is not possible, one may store only the $D \times D$ inverse covariance $\SSigma^{-1}$ and fill the $DP \times DP$ inverse covariance at each step with complexity $O(D^2P^2)$ using the formula $(\mathbf{A} \otimes \mathbf{B})^{-1} = \mathbf{A}^{-1} \otimes \mathbf{B}^{-1}$.  Once one has this inverse, the inner product term in each
\begin{align*}
	q(\TTheta_{p}^*, \TTTheta^*) \propto \exp \left( - \frac{1}{2}(\TTheta^*_{-p} - \TTheta_{p}^*)^{T}  \left(\left(\mathbf{1}_P\mathbf{1}_P^T + \mathbf{I}_P \right) \otimes  \SSigma\right)^{-1} (\TTheta^*_{-p} - \TTheta_{p}^*) \right)
\end{align*}
requires $O(D^2P^2)$ floating point operations.  The upshot is an $O(D^2P^3)$ cost to compute all $P+1$ proposal densities  $q(\TTheta^*_p,\TTTheta^*)$.
But if one adapts $\SSigma$ following \citet{haario2001adaptive}, then this acceptance mechanism requires: an $O(D^3)$ matrix inversion for the updated $\SSigma$ at each step; an $O(P^2D^2)$ Kronecker product to construct the $DP \times DP$ covariance matrix; and the $O(D^2P^3)$ floating point operations to compute the $P+1$ proposal densities $q(\TTheta^*_p,\TTTheta^*)$.  Thankfully, it turns out, none of these computations are necessary.  

\begin{lemma}\label{lem:gaussCenter}
	The Gaussian centered multiproposal $q(\TTheta^{(s)},\TTTheta^*)$ defined by
	\begin{align*}
		\TTheta^*_1,\dots,\TTheta^*_P \stackrel{iid}{\sim} N_D(\TTheta_{0},\SSigma) \, , \quad \TTheta_{0} \sim N_D(\TTheta_{P+1}^*,\SSigma) \, .
	\end{align*}
	satisfies symmetry relation \eqref{eq:criterion}, namely
	\begin{align*}
		q(\TTheta^*_1, \TTTheta^*) = \dots = q(\TTheta^*_p,  \TTTheta^*)= \dots = q(\TTheta^*_{P+1}, \TTTheta^*) \, .
	\end{align*}
\end{lemma}
\begin{proof}
	Let $p(\cdot,\cdot)$ denote the generic probability density function evaluated at the right and conditioned on the left, and let $\phi(\cdot,\cdot)$ an analogous multivariate Gaussian density with dependence on the covariance suppressed. For two arbitrary indices $p, p' \in \{1,\dots,P+1\}$,
	\begin{align*}
		q(\TTheta^*_p,\TTTheta^*) - q(\TTheta^*_{p'},\TTTheta^*)&= \int_{\mathbb{R}^D} p(\TTheta_p^*,\TTheta_{0}) \,  p(\TTheta_{0},\TTheta_{-p}^*)\, \dd \TTheta_{0} - \int_{\mathbb{R}^D} p(\TTheta_p^*,\TTheta_{0}) \,  p(\TTheta_{0},\TTheta_{-p}^*)\, \dd \TTheta_{0}\\ \nonumber 
		&= \int_{\mathbb{R}^D}\left( p(\TTheta_p^*,\TTheta_{0}) \,  p(\TTheta_{0},\TTheta_{-p}^*)  -  p(\TTheta_{p'}^*,\TTheta_{0}) \,  p(\TTheta_{0},\TTheta_{-p'}^*)\right) \dd \TTheta_{0} \\ \nonumber 
		&=\int_{\mathbb{R}^D} \left( \phi(\TTheta_p^*,\TTheta_{0}) \prod_{p''\neq p}\phi(\TTheta_{0},\TTheta_{p''}^*) -\phi(\TTheta_{p'}^*,\TTheta_{0}) \prod_{p'''\neq p'}\phi(\TTheta_{0},\TTheta_{p'''}^*)\right) \dd \TTheta_{0} \\ \nonumber
		&= \int_{\mathbb{R}^D} \left(  \prod_{p''=1}^{P+1} \phi(\TTheta_{0},\TTheta_{p''}^*) - \prod_{p'''=1}^{P+1}\phi(\TTheta_{0},\TTheta_{p'''}^*)\right) \dd \TTheta_{0} = 0 \, .
	\end{align*}
	In the third line, we use the Gaussianity and conditional independence structure of the multiproposal. In the fourth line, we use the symmetry of the Gaussian density function.
\end{proof}

This simple result fills in a gap in \citet{tjelmeland2004using}: namely, it relates the symmetry in a proposal strategy to a simplified acceptance mechanism \eqref{eq:simpProbs}.  Moreover, subsequent work \citep{glatt} builds on this result by examining cases in which structured proposals lead to simplified acceptance probabilities.  In the following, we develop a multivariate extension of the multiproposal \eqref{eq:p2} and show that it also satisfies symmetry relation \eqref{eq:criterion} and thus also qualifies for the simplified acceptance mechanism \eqref{eq:simpProbs}.

\section{Simplicial sampling}\label{sec:simp}

We develop a proposal mechanism and a few variations that satisfy symmetry relation \eqref{eq:criterion} and help our basic algorithm preserve reversibility with respect to the target $\pi(\TTheta)$. This method uses certain mathematical objects introduced in Section \ref{sec:intro}: namely, the regular $D$-simplex, the orthogonal group $\orthog_D$ and the uniform Haar measure over the orthogonal group $\haar(\orthog_D)$.  Start with a regular simplex identified by a set of fixed vertices $\v_1, \dots, \v_{D+1}\in \mathbb{R}^D$ that satisfies $\v_{D+1}=\boldsymbol{0}$.  Equating the number of new proposals $P$ with the state space dimension $D$, the simplicial sampler generates and chooses from proposals $\TTheta^*_1,\dots,\TTheta^*_{D+1}$ as follows.
\begin{enumerate}
	\item Sample $\QQ \sim \haar(\orthog_D)$.
	\item Rotate and translate the simplicial vertices $(\v_1,\dots,\v_D,\boldsymbol{0}) \longmapsto \QQ (\v_1,\dots,\v_D,\boldsymbol{0}) + \TTheta^{(s)} =: (\TTheta_1^*,\dots,\TTheta_{D+1}^*)$.
	\item Draw a single sample $\TTheta^*_d$ from $(\TTheta^*_1,\dots,\TTheta_{D+1}^*)$ with probability proportional to $\pi(\TTheta^*_d)$.
	\item Set $\TTheta^{(s+1)} = \TTheta^*_{d}$.
\end{enumerate}
In short, the simplicial sampler advances its Markov chain by randomly rotating a set of $D$ equidistant vertices about the current state $\TTheta^{(s)}$ (Fig.~\ref{fig:rotate1}) and randomly selecting the next state from among all $D+1$ vertices of the resulting simplex.

A modified simplicial sampler that only generates a single proposal reduces to Barker's acceptance criterion \citep{barker1965monte}
\begin{align*}
	\pi(\TTheta^*)/\left(\pi(\TTheta^*)+\pi(\TTheta^{(s)})  \right)
\end{align*}
with a symmetric proposal distribution. Just as Barker's criterion is suboptimal with respect to the Peskun ordering \citep{peskun1973optimum}, our criterion is also suboptimal. With this in mind, \citet{tjelmeland2004using} provides an algorithm that manipulates the acceptance probabilities \eqref{eq:probsTjel}, of which our simplified acceptance probabilities are a special case. 

\begin{figure}[!t]
	\centering
	\scalebox{0.8}{
		\begin{tikzpicture}
			\node (a)[label={[label distance=0cm]180:$\TTheta_1^*$}] at (0.5,0.5) {};
			\node (b)[label={[label distance=0cm]0:$\TTheta_3^*$}] at (4.3,3) {};
			\node (c)[label={[label distance=0cm]100:$\TTheta_2^*$}] at (1.6,4) {};
			\node (d)[label={[label distance=0cm]0:{$\TTheta^{(s)}$}}] at (4.2,-1) {};
			\draw (a.center) -- (b.center) -- (c.center) -- cycle;
			\draw (a.center) -- (d.center) -- (b.center);
			\draw [dashed](d.center) -- (c.center);
			
			\draw [-latex, thick, rotate=30] (6.5,-1.5) arc [start angle=-160, end angle=160, x radius=1.5cm, y radius=0.5cm];
			\draw[thick,->] (6.3,0) arc (145:35:1.6);
			\node (e)[label={[label distance=0cm]0:{$\times$}}] at (7.1,1.4) {};
			
			\node (a)[label={[label distance=0cm]0:$\TTheta_1^*$}] at (15.5,-0.5) {};
			\node (b)[label={[label distance=0cm]180:$\TTheta_3^*$}] at (13,3) {};
			\node (c)[label={[label distance=0cm]0:$\TTheta_2^*$}] at (14.8,1.8) {};
			\node (d)[label={[label distance=0cm]180:{$\TTheta^{(s)}$}}] at (11,-1) {};
			\draw (a.center) -- (c.center) -- (d.center) -- cycle;
			\draw (b.center) -- (c.center) -- (d.center) -- cycle;
			\draw (a.center) -- (d.center) -- (b.center);
			\draw [dashed](a.center) -- (b.center);
		\end{tikzpicture}
	}
	\caption{A simplicial sampling multiproposal for $D=3$. A proposal set is obtained by rotating three simplex vertices about current state $\TTheta^{(s)}$.}\label{fig:rotate1}
\end{figure}
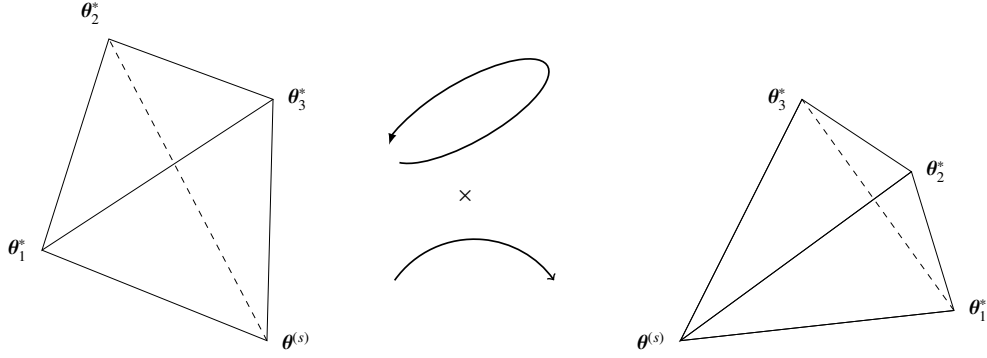

Importantly, the random rotation step satisfies symmetry relation \eqref{eq:criterion}.

\begin{lemma}\label{lem:symm}
	The simplicial sampler multiproposal $q(\TTheta^{(s)},\TTTheta^*)$ defined by
	\begin{align*}
		(\v_1,\dots,\v_D,\boldsymbol{0}) \longmapsto \QQ (\v_1,\dots,\v_D,\boldsymbol{0}) + \TTheta^{(s)} =: (\TTheta_1^*,\dots,\TTheta_{D+1}^*) \quad \mbox{for} \quad \QQ \sim \haar\left( \orthog_D \right)
	\end{align*}
	satisfies symmetry relation \eqref{eq:criterion}, namely
	\begin{align*}
		q(\TTheta^*_1, \TTTheta^*) = \dots = q(\TTheta^*_d,  \TTTheta^*)= \dots = q(\TTheta^*_{D+1}, \TTTheta^*) \, .
	\end{align*}
\end{lemma}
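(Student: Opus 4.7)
My plan is to exhibit, for each $d \in \{1, \dots, D\}$, a measure-preserving bijection of $\orthog_D$ that reparameterizes the proposal process starting at $\TTheta_d^*$ into the proposal process starting at $\TTheta^{(s)} = \TTheta_{D+1}^*$ while producing the same multiset $\TTTheta^*$. This reduces the symmetry claim to the right-invariance of the Haar measure on $\orthog_D$ and avoids any direct computation of a density on proposal space.

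The first step is to exploit the regularity of the simplex $(\v_1,\dots,\v_D,\boldsymbol{0})$. Because its $D+1$ vertices are mutually equidistant, its isometry group acts transitively on vertices: for each $d$ there is a rigid motion of $\mathbb{R}^D$ sending $\boldsymbol{0}$ to $\v_d$ and permuting the remaining vertices. I would write this rigid motion as $\mathbf{x} \mapsto \QQ_d \mathbf{x} + \v_d$ and pick the specific realization that swaps $\boldsymbol{0}$ with $\v_d$ and fixes every other vertex, so that $\QQ_d$ is defined on the basis $\{\v_1,\dots,\v_D\}$ by $\QQ_d \v_d = -\v_d$ and $\QQ_d \v_i = \v_i - \v_d$ for $i \neq d$. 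A short inner-product check using $\|\v_i\|^2 = \lambda^2$ and $\v_i \cdot \v_j = \lambda^2/2$ for $i \neq j$ verifies that $\QQ_d$ preserves all pairwise inner products on the basis, hence $\QQ_d \in \orthog_D$.

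The second step is the bijection argument. If $\QQ \in \orthog_D$ generates $\TTTheta^* = (\TTheta^{(s)}+\QQ\v_1,\dots,\TTheta^{(s)}+\QQ\v_D,\TTheta^{(s)})$ from the current state, then starting instead from $\TTheta_d^* = \TTheta^{(s)} + \QQ \v_d$ with rotation $\QQ' := \QQ \QQ_d$ produces the same multiset, because
\begin{align*}
\TTheta_d^* + \QQ \QQ_d \v_i = \TTheta^{(s)} + \QQ(\v_d + \QQ_d \v_i)
\end{align*}
and $\v_d + \QQ_d \v_i$ runs over exactly $\{\v_1,\dots,\v_D,\boldsymbol{0}\}$ as $i$ varies, by construction of $\QQ_d$. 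The map $\QQ \mapsto \QQ\QQ_d$ is a bijection of $\orthog_D$ preserving the Haar measure by right-invariance, so the pushforward of $\haar(\orthog_D)$ under the proposal map based at $\TTheta_d^*$ equals the pushforward under the proposal map based at $\TTheta^{(s)}$, giving $q(\TTheta_d^*, \TTTheta^*) = q(\TTheta^{(s)}, \TTTheta^*)$.

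The main obstacle is the construction and verification of $\QQ_d$ in the first step, since one must exhibit an honest orthogonal transformation realizing the vertex-swap rather than merely assert the simplex's combinatorial symmetry. Once $\QQ_d$ is in hand, the bijection-plus-Haar-invariance argument is essentially automatic and sidesteps any delicate question about reference measures on the space of multiproposals, since a Haar-preserving bijection transports the full distribution — and hence the density with respect to any common reference measure — intact.
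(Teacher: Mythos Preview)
Your proof is correct and takes essentially the same approach as the paper: both hinge on the reflection of the regular simplex that swaps the base vertex with a chosen vertex while fixing all others, and then invoke invariance of the Haar measure. The paper describes this reflection geometrically in the rotated frame (as the unique reflection $\mathbf{R}$ exchanging $\boldsymbol{0}$ and $\TTheta^*_d$) and composes on the left, whereas you construct its conjugate explicitly on the template basis $\{\v_1,\dots,\v_D\}$ and compose on the right; the resulting bijection of $\orthog_D$ is identical, and your handling of the Haar-invariance step via an explicit measure-preserving reparameterization is somewhat more careful than the paper's appeal to uniformity.
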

\begin{proof}
	Start by translating coordinates so that $\TTheta^{(s)} = \boldsymbol{0}$.
	We want to show that $q(\boldsymbol{0},\TTTheta^*)=q(\TTheta^*_d,\TTTheta^*)$ for an arbitrary $\TTheta_d^*\in \TTTheta^*$. We focus on the case $d\leq D$, since the alternative case is trivial.
	Since the Haar measure $\haar\left( \orthog_D\right)$ is uniform on $\orthog_D$, we only need to show the existence of a rotation $\QQ_d\in \orthog_D$ that maps
	\begin{align*}
		(\v_1,\dots,\v_D) \longmapsto (\TTheta_1^*,\dots,\TTheta_{d-1}^*,\boldsymbol{0} , \TTheta_{d+1}^*,\dots,\TTheta_{D}^*) \, .
	\end{align*} 
	The key is recognizing that the unique reflection $\mathbf{R}$ that exchanges $\boldsymbol{0}$ and $\TTheta_{d}^*$ leaves all other vertices unchanged because they are equidistant from $\boldsymbol{0}$ and $\TTheta_{d}^*$ and  inhabit the hyperplane that splits them evenly. In symbols, $\mathbf{R}\TTheta^*_{d'}= \TTheta^*_{d'}$ for $d'\neq d$. Thus
	\begin{align*}
		\QQ^{T} \mathbf{R} \TTheta^*_{d'} = \QQ^{T}  \TTheta^*_{d'} = \v_{d'} \, , \quad\forall d' \neq d \, ,
	\end{align*}
	and 
	\begin{align*}
		\QQ^{T} \mathbf{R} \boldsymbol{0}=  \QQ^{T} \TTheta^*_d = \v_d \, .
	\end{align*}
	It follows that $( \QQ^{T} \mathbf{R})^T=\mathbf{R}\QQ$ is the desired rotation matrix $\QQ_d$.
	
\end{proof}

Fig.~\ref{fig:correctness} provides a stylized visualization of this argument in the $2D$ setting when $\QQ$ represents a 60$^{\circ}$ rotation clockwise about $\TTheta^{(s)}$.
Taken together, Lemma \ref{lem:symm} and Proposition \ref{prop:db} assert that the simplicial sampler maintains detailed balance with respect to the target distribution.
\begin{thm}\label{thm:first}
	If the simplicial sampler's multiproposal is $\pi$-irreducible and absolutely continuous with respect to $\pi(\TTheta)$, then $\pi(\TTheta)$ is the unique stationary and limiting distribution of the Markov chain, and the strong law of large numbers holds.
\end{thm}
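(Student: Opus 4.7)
The plan is to layer the standard general-state-space Markov chain convergence theory on top of the detailed balance relation already secured by Lemma \ref{lem:symm} and Proposition \ref{prop:db}. The two hypotheses provide $\pi$-irreducibility and absolute continuity of the multiproposal with respect to $\pi$; what remains is to assemble stationarity, aperiodicity, and Harris recurrence, and then invoke the usual ergodic theorem.

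Stationarity is immediate. Lemma \ref{lem:symm} verifies the symmetry relation \eqref{eq:criterion}, so Proposition \ref{prop:db} gives detailed balance and hence $\pi$-invariance of the transition kernel. For aperiodicity, I would observe that the current state $\TTheta^{(s)} = \TTheta^*_{D+1}$ is always among the proposals and is retained with strictly positive probability $\pi(\TTheta^{(s)})/\sum_{p}\pi(\TTheta^*_p)$ regardless of the rotation $\QQ$ drawn from $\haar(\orthog_D)$. Thus the transition kernel carries a positive atom at $\TTheta^{(s)}$ for $\pi$-almost every starting point, which rules out cyclic behavior and secures strong aperiodicity.

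Combining $\pi$-irreducibility with the finite invariant measure $\pi$ yields positive recurrence and uniqueness of $\pi$ as a stationary distribution. For a reversible, $\pi$-irreducible chain admitting an invariant probability measure and whose kernel is absolutely continuous with respect to $\pi$, Harris recurrence follows from a standard argument (e.g., Tierney, 1994, Corollary 1), at which point the general ergodic theorem for positive Harris recurrent aperiodic chains delivers convergence in total variation to $\pi$ from every starting point, along with the strong law of large numbers for all $\pi$-integrable functionals.

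The main obstacle, such as it is, is bookkeeping rather than analytic: deciding how explicitly to promote mere recurrence to Harris recurrence so that the conclusion holds from every starting point rather than merely $\pi$-almost every one. Because reversibility, $\pi$-irreducibility, and absolute continuity are jointly in hand, a one-line appeal to the standard theory closes the gap.
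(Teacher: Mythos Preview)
Your proposal is correct and follows essentially the same route as the paper: invoke reversibility from Lemma~\ref{lem:symm} and Proposition~\ref{prop:db}, obtain aperiodicity from the self-loop (the positive probability of retaining $\TTheta^{(s)}=\TTheta^*_{D+1}$), and then appeal to the standard Tierney (1994) machinery. The paper's own proof is a terse one-liner citing exactly these ingredients, so your version is simply a more fully fleshed-out rendition of the same argument.
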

\begin{proof}
	The result is a simple combination of the sampler's reversibility with respect to $\pi(\TTheta)$ and the aperiodicity that arises from its self-loop, i.e., the ability to set $\TTheta^{(s+1)}$ to $\TTheta^{(s)}$ \citep{tierney1994markov}.
\end{proof}

While $\pi$-irreducibility is a major hypothesis of Theorem \ref{thm:first}, we do not formally prove that the simplicial sampler exhibits generally irreducible behavior.  That said, there are two intuitive reasons for thinking that the sampler produces generally irreducible Markov chains.  First, the simplicial sampler can return to an arbitrarily small ball around a given state within just two steps with positive probability: the first iteration accepts a state $\lambda$ away from the current state, and the second iteration accepts a state in the neighborhood of the preceding state.  Second, the simplicial sampler can amble arbitrarily far away from a given state in the manner that a distracted geometry student can walk a compass clear off their desk.
The need for a formally justified sampler leads us to randomly scale the proposal simplex in Section \ref{sec:gauss}.

\usetikzlibrary{arrows.meta}
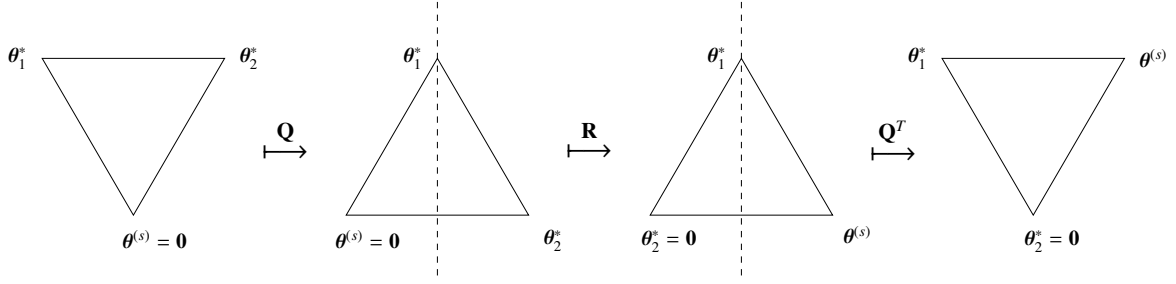
\begin{figure}[!t]
	\centering
	\scalebox{0.8}{
		\begin{tikzpicture}
			
			\node (a)[label={[label distance=0cm]270:\hspace{2em}$\TTheta^{(s)}=\boldsymbol{0}$}] at (-3,0) {};
			\node (b)[label={[label distance=0cm]180:$\TTheta_1^*$}] at (-4.5,2.598076) {};
			\node (c)[label={[label distance=0cm]0:$\TTheta_2^*$}] at (-1.5,2.598076) {};
			\draw (a.center) -- (b.center) -- (c.center) -- cycle;
			
			\node (e)[scale=1.5] at (-0.5,1.25) {$\stackrel{\QQ}{\longmapsto}$};
			
			\node (a)[label={[label distance=0cm]270:\hspace{1em}$\quad\TTheta^{(s)}=\boldsymbol{0}$}] at (0.5,0) {};
			\node (b)[label={[label distance=0cm]180:$\TTheta_1^*$}] at (2,2.598076) {};
			\node (c)[label={[label distance=0cm]315:$\TTheta_2^*$}] at (3.5,0) {};
			\draw (a.center) -- (b.center) -- (c.center) -- cycle;
			
			\draw [dashed] (2,-1) -- (2,1+2.598076);

  	\node (e)[scale=1.5] at (4.5,1.25) {$\stackrel{\mathbf{R}}{\longmapsto}$};
%
%

			\node (a)[label={[label distance=0cm]270:\hspace{1.75em}$\TTheta_2^*=\boldsymbol{0}$}] at (5.5,0) {};
			\node (b)[label={[label distance=0cm]180:$\TTheta_1^*$}] at (7,2.598076) {};
			\node (c)[label={[label distance=-0.1cm]315:$\TTheta^{(s)}$}] at (8.5,0) {};
			\draw (a.center) -- (b.center) -- (c.center) -- cycle;
			
			\draw [dashed] (7,-1) -- (7,1+2.598076);

			\node (e)[scale=1.5] at (9.5,1.25) {$\stackrel{\QQ^T}{\longmapsto}$};
			
			\node (a)[label={[label distance=0cm]270:\hspace{1.75em}$\TTheta_2^*=\boldsymbol{0}$}] at (11.8,0) {};
			\node (b)[label={[label distance=0cm]0:$\TTheta^{(s)}$}] at (13.3,2.598076) {};
			\node (c)[label={[label distance=0cm]180:$\TTheta_1^*$}] at (10.3,2.598076) {};
			\draw (a.center) -- (b.center) -- (c.center) -- cycle;		
			
			%
			%
		\end{tikzpicture}
	}
	\caption{Illustrating key argument of Lemma \ref{lem:symm} for $D=2$. $\QQ$ is the simplicial sampler's random rotation about $\TTheta^{(s)}$ and $\mathbf{R}$ is a reflection that exchanges a given proposal $\TTheta_2^{*}$ with the current state $\TTheta^{(s)}$ while preserving all other proposals.}\label{fig:correctness}
\end{figure}

\subsection{Variations}

\subsubsection{Gaussian simplicial sampling}\label{sec:gauss}

\newcommand{\one}{\boldsymbol{1}}

If we scale the simplex edge lengths by the square root of a chi-square random variable, the proposals will have multivariate Gaussian marginal distributions.
The orthogonal group $\orthog_D$ acts transitively on the sphere $\mathcal{S}^{D-1}$, i.e., it takes any point on $\mathcal{S}^{D-1}$ to any other \citep{Lee2012}.  For this reason, left multiplying a single point on $\mathcal{S}^{D-1}$ by a Haar distributed rotation matrix $\QQ \sim \haar(\orthog_D)$ is one method for drawing a sample from the sphere's uniform distribution. On the other hand, it is a well known fact that one may obtain a standard $D$-dimensional Gaussian vector by uniformly sampling a single point on the sphere $\mathcal{S}^{D-1}$ and scaling that point by the square root of an independently sampled $\chi^2$ random variable with $D$ degrees of freedom. Randomly scaling the simplicial edge length in a similar manner provides the following variation that we call the Gaussian simplicial sampler.
\begin{enumerate}
	\item Independently sample $r\sim \chi^2(D)$ and  $\QQ\sim\haar(\orthog_D)$.
	\item Rotate, scale and translate the vertices $(\v_1,\dots,\v_D,\boldsymbol{0}) \longmapsto \sqrt{r} \QQ (\v_1,\dots,\v_D,\boldsymbol{0}) + \TTheta^{(s)} =: (\TTheta_1^*,\dots,\TTheta_{D+1}^*)$.
	\item Draw a single sample $\TTheta^*_d$ from $(\TTheta^*_1,\dots,\TTheta_{D+1}^*)$ with probability proportional to $\pi(\TTheta^*_d)$.
	\item Set $\TTheta^{(s+1)} = \TTheta^*_{d}$.
\end{enumerate}
By the above discussion, each of the proposals $\TTheta_d^*$ for $d\in\{1,\dots,D\}$ has a Gaussian marginal distribution with mean $\TTheta^{(s)}$ and covariance matrix $\lambda^2\mathbf{I}_D$, where $\lambda$ is still the simplicial edge length.  Because of its random scaling, a stronger result than Theorem \ref{thm:first} applies to the Gaussian simplicial sampler.
\begin{cor}\label{cor:one}
	Any distribution $\pi(\TTheta)$ with support $\mathbb{R}^D$ is the unique stationary and limiting distribution of the Gaussian simplicial sampler that targets it, and the strong law of large numbers holds. 
\end{cor}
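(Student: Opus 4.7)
My plan is to reduce the claim to Theorem~\ref{thm:first}. The symmetry argument of Lemma~\ref{lem:symm} carries over verbatim once one conditions on the scalar $\sqrt{r}$, since the scaled figure is itself an equilateral simplex centred at $\TTheta^{(s)}$; marginalising over $r$ then preserves the equality $q(\TTheta^*_d, \TTTheta^*) = q(\TTheta^*_{d'}, \TTTheta^*)$ and hence, via Proposition~\ref{prop:db}, reversibility with respect to $\pi(\TTheta)$. What remains is to check the two hypotheses of Theorem~\ref{thm:first}: $\pi$-irreducibility and absolute continuity with respect to $\pi(\TTheta)$.

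The key step is to identify the marginal law of a single proposal $\TTheta^*_d$ for $d \le D$. Because $\QQ \sim \haar(\orthog_D)$ acts transitively on the sphere of radius $\lambda = \|\v_d\|_2$, the vector $\QQ\v_d$ is uniform on that sphere; multiplying by an independent $\sqrt{r}$ with $r \sim \chi^2(D)$ is the standard isotropic construction, so
\begin{align*}
\sqrt{r}\,\QQ\v_d \sim \mathcal{N}(\boldsymbol{0}, \lambda^2 \mathbf{I}_D) \, .
\end{align*}
Translating by $\TTheta^{(s)}$ shows $\TTheta^*_d \mid \TTheta^{(s)} \sim \mathcal{N}(\TTheta^{(s)}, \lambda^2 \mathbf{I}_D)$, a distribution whose density is strictly positive on all of $\mathbb{R}^D$.

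Both hypotheses now fall out. For any Borel $A$ with $\pi(A) > 0$, the assumption that $\pi$ has support $\mathbb{R}^D$ forces $A$ to carry positive Lebesgue mass and therefore positive Gaussian mass under the marginal of $\TTheta^*_1$; conditionally on this event, the categorical draw selects $\TTheta^*_1$ with positive probability, so the one-step kernel satisfies $P(\TTheta^{(s)}, A) > 0$, establishing $\pi$-irreducibility. Absolute continuity with respect to $\pi$ follows likewise from each proposal possessing a Lebesgue density. Invoking Theorem~\ref{thm:first} then yields the corollary.

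The main obstacle I foresee, and the reason the unscaled simplicial sampler does not admit the same blanket corollary, is a degeneracy issue: without the $\chi^2$ scaling the $D$ new vertices lie on a common sphere of radius $\lambda$ about $\TTheta^{(s)}$, so one-step reachability is confined to a lower-dimensional set and $\pi$-irreducibility can fail for targets whose mass does not align with that sphere. The random radius $\sqrt{r}$ eliminates this obstruction by smearing the reachable set over all of $\mathbb{R}^D$, which is precisely what the corollary leverages.
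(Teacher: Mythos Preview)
Your proof is correct and follows the paper's approach: reduce to Theorem~\ref{thm:first} by verifying $\pi$-irreducibility (via the everywhere-positive Gaussian marginal of each proposal, which the paper phrases as Tierney's sufficient condition together with the ``unbounded nature of the multiproposal's $\chi^2$ scaling'') and absolute continuity. You are more explicit than the paper in checking that Lemma~\ref{lem:symm}'s symmetry survives the random scaling and in deriving the Gaussian marginal, both of which the paper leaves to the surrounding text.
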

\begin{proof}
	All that needs to be established is $\pi$-irreducibility and absolute continuity. 	The latter condition is accomplished by hypothesis.  The former condition is accomplished by the sufficient condition \citep{tierney1994markov} that
	\begin{align*}
		\int_A q(\TTheta_0,\TTTheta^*) \dd \TTheta^{*}_d > 0 \, ,  \quad \mbox{if}  \quad  \int_A \pi(\TTheta) \dd \TTheta > 0, \quad  d \in\{1,\dots,D\}\,  , \; A \subset \mathbb{R}^D\, .
	\end{align*}
	This condition is itself accomplished by the unbounded nature of the multiproposal's $\chi^2$ scaling.  
	 All edges are scaled by the same random variable, and the symmetry of the proposal simplex is therefore maintained. Because the random scaling is independent of the random rotation, the conditions within the proof of Lemma \ref{lem:symm} remain exactly the same. 
\end{proof}
In Section \ref{sec:results}, we show that its random scaling gives the Gaussian simplicial sampler improved performance for a multimodal target distribution.

\subsubsection{Preconditioned simplicial sampling}\label{sec:pc}
\newcommand{\CC}{\mathbf{C}}

For $\CC \succ \boldsymbol{0}$, a $D\times D$ and real-valued matrix, the \emph{preconditioned simplicial sampler} takes the following steps.
\begin{enumerate}
	\item Sample $\QQ\sim\haar(\orthog_D)$.
	\item Rotate, precondition and translate the vertices $(\v_1,\dots,\v_D,\boldsymbol{0}) \longmapsto \CC^{1/2} \QQ (\v_1,\dots,\v_D,\boldsymbol{0}) + \TTheta^{(s)} =: (\TTheta_1^*,\dots,\TTheta_{D+1}^*)$.
	\item Draw a single sample $\TTheta^*_d$ from $(\TTheta^*_1,\dots,\TTheta_{D+1}^*)$ with probability proportional to $\pi(\TTheta^*_d)$.
	\item Set $\TTheta^{(s+1)} = \TTheta^*_{d}$.
\end{enumerate}
The matrix $\CC^{1/2}\QQ$ is itself a rotation in the Euclidean space equipped with inner product
$\langle \v, \v' \rangle_{\CC^{-1}} := \v^T\CC^{-1} \v'$ (Fig.~\ref{fig:rotate2}).  
\begin{cor}\label{cor:two}
	The guarantees of Theorem \ref{thm:first} apply to the preconditioned simplicial sampler as well.
\end{cor}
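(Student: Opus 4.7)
The plan is to reduce the preconditioned sampler to the ordinary simplicial sampler by a linear change of variables, and then invoke Theorem~\ref{thm:first} directly. I would introduce new coordinates $\boldsymbol{\eta} := \CC^{-1/2}\TTheta$ and consider the push-forward target $\tilde{\pi}(\boldsymbol{\eta}) \propto \pi(\CC^{1/2}\boldsymbol{\eta})$, which still has full support on $\mathbb{R}^D$. Under this map, the preconditioned proposal $\TTheta^*_d = \CC^{1/2}\QQ\v_d + \TTheta^{(s)}$ pulls back to $\boldsymbol{\eta}^*_d = \QQ\v_d + \boldsymbol{\eta}^{(s)}$, which is exactly an ordinary simplicial sampler multiproposal on the fixed vertices $(\v_1,\dots,\v_D,\boldsymbol{0})$.

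Next, I would show that the selection step is also invariant under the transformation. Since $\pi(\TTheta^*_d) = |\det \CC^{1/2}|\, \tilde{\pi}(\boldsymbol{\eta}^*_d)$ with a common Jacobian that cancels in the self-normalized probabilities of step~3, the two chains are pathwise equivalent. Applying Theorem~\ref{thm:first} to the ordinary simplicial sampler targeting $\tilde{\pi}$ then gives reversibility, $\pi$-irreducibility, and the law of large numbers for the $\boldsymbol{\eta}$-chain; pushing these back through the bijection $\boldsymbol{\eta}\mapsto \CC^{1/2}\boldsymbol{\eta}$ yields the desired conclusions for the preconditioned sampler targeting $\pi$. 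The absolute continuity and $\pi$-irreducibility hypotheses for the transformed multiproposal are inherited from the original ones because $\CC^{1/2}$ is a nondegenerate linear bijection that preserves Borel structure and null sets.

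The step I expect to require the most care is verifying that the self-normalized selection probabilities really are unchanged by the coordinate transformation---this is the one place where the argument is not purely formal and relies on the Jacobian factor being \emph{common across} all $D+1$ vertices. An alternative route, hinted at by the paper's remark that $\CC^{1/2}\QQ$ is a rotation in the inner product $\langle \v,\v'\rangle_{\CC^{-1}} := \v^T\CC^{-1}\v'$, would be to replay the reflection construction of Lemma~\ref{lem:symm} inside this modified inner product space, where the vertices $\CC^{1/2}\v_d$ remain equidistant. Carrying out that route cleanly, however, requires identifying the push-forward of the Haar measure on $\orthog_D$ with the appropriate Haar measure on the $\CC^{-1}$-orthogonal group, so I would favor the change-of-variables approach as the less bookkeeping-intensive path to the corollary.
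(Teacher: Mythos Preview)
Your change-of-variables argument is correct and constitutes a genuinely different route from the paper's own proof. The paper does not transform coordinates; instead it simply asserts that the symmetry relation of Equation~\eqref{eq:criterion} continues to hold for the preconditioned proposal, and that the proof of Lemma~\ref{lem:symm} goes through verbatim once one replaces the map $\QQ_d$ by $\CC^{1/2}\QQ_d$. In other words, the paper takes exactly the ``alternative route'' you sketch at the end and dismisses as more bookkeeping-intensive.

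Your conjugation argument has the advantage of reducing the corollary to a single invocation of Theorem~\ref{thm:first} on the pushed-forward target, with no need to revisit the reflection construction or reason about Haar measure on the $\CC^{-1}$-orthogonal group; it also transfers irreducibility, aperiodicity and the strong law in one stroke via the bijection $\CC^{1/2}$. The paper's approach, by contrast, is shorter to state but leaves implicit precisely the point you flagged---that left multiplication by the appropriate reflection still preserves the Haar distribution after conjugation by $\CC^{1/2}$. Two minor remarks: your aside that $\tilde{\pi}$ ``still has full support on $\mathbb{R}^D$'' is unnecessary here (full support is only assumed for the Gaussian variant, not in Theorem~\ref{thm:first}); and the direction of your Jacobian identity is inverted, though as you note the constant cancels in the self-normalized weights so the argument is unaffected.
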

\begin{proof}
	The preconditioned simplicial sampler maintains detailed balance with respect to its target distribution $\pi(\TTheta)$ because it satisfies symmetry relation \eqref{eq:criterion}.
	The proof takes the exact same form as that of Lemma \ref{lem:symm}, replacing $\QQ_d$ with $\CC^{1/2}\QQ_d$.
\end{proof}
Corollaries \ref{cor:one} and \ref{cor:two} combine to guarantee a unique stationary distribution and strong law of large numbers for a preconditioned Gaussian simplicial sampler, and we explore this algorithm in Section \ref{sec:results}.

\begin{figure}[!t]
	\centering
	\resizebox{\linewidth}{!}{
		\begin{tikzpicture}
			\node (a)[label={[label distance=-0.1cm]180:$\TTheta_1^*$}] at (0,0.1) {};
			\node (b)[label={[label distance=-0.1cm]0:$\TTheta_3^*$}] at (3.9,-0.1) {};
			\node (c)[label={[label distance=-0.1cm]80:$\TTheta_2^*$}] at (2.1,1.2) {};
			\node (d)[label={[label distance=-0.3cm]140:{\footnotesize$\TTheta^{(s)}$}}] at (2,0.4) {};
			\draw (a.center) -- (b.center) -- (c.center) -- cycle;
			\draw[dashed] (a.center) -- (d.center) -- (b.center);
			\draw[dashed] (d.center) -- (c.center);
			\node (f)[label={[label distance=-0.1cm]80:$90^\circ$}] at (5,-1.3) {};
			
			\node (a)[label={[label distance=-0.1cm]290:$\TTheta_1^*$}] at (9.3,-0.2) {};
			\node (b)[label={[label distance=-0.2cm]70:$\TTheta_3^*$}] at (9.4,1.2) {};
			\node (c)[label={[label distance=-0.1cm]180:$\TTheta_2^*$}] at (6,0.6) {};
			\node (d) at (7.5,0.4) {};
			\draw (a.center) -- (b.center) -- (c.center) -- cycle;
			\draw[dashed] (a.center) -- (d.center) -- (b.center);
			\draw[dashed] (d.center) -- (c.center);
			\node (f)[label={[label distance=-0.1cm]80:$90^\circ$}] at (11,-1.3) {};
			
			\node (a)[label={[label distance=-0.2cm]45:$\TTheta_1^*$}] at (15.2,1) {};
			\node (b)[label={[label distance=-0.2cm]135:$\TTheta_3^*$}] at (11.4,1.2) {};
			\node (c)[label={[label distance=-0.1cm]270:$\TTheta_2^*$}] at (13.3,-0.1) {};
			\node (d) at (13.4,0.4) {};
			\draw (a.center) -- (b.center) -- (c.center) -- cycle;
			\draw[dashed] (a.center) -- (d.center) -- (b.center);
			\draw[dashed] (d.center) -- (c.center);

			\draw[thick,->] (4.5,-1) arc (215:325:1);
			\draw[thick,->] (10.5,-1) arc (215:325:1);
		\end{tikzpicture}
	}
	\caption{The preconditioned simplicial sampler's rotations about $\TTheta^{(s)}$ with respect to a non-standard Euclidean metric.  To a viewer relying on the standard Euclidean metric, fixed-length edges appear to lengthen and shorten as the simplex is rotated 90$^\circ$ according to the standard metric about the axis perpendicular to the page.}\label{fig:rotate2}
\end{figure}
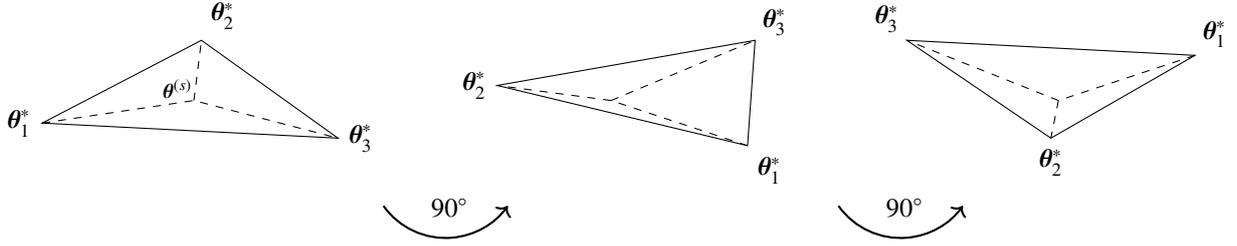

\subsubsection{Adaptive simplicial sampling}\label{sec:adapt}

As has been established for other MCMC algorithms, it is often useful to adapt the simplicial sampler multiproposal. We employ two distinct methods of adaptation in Section \ref{sec:results}.  Both methods use the diminishing adaptation strategy \citep{rosenthal2011optimal}, which sacrifices guarantees of a strong law of large numbers but maintains a weak law.  The first adapts all simplex edge lengths while keeping them equal so as to achieve a desired `acceptance' rate. Although the simplicial sampler does not use a classical accept-reject step, we can check whether state $\TTheta^{(s+1)}$ equals $\TTheta^{(s)}$ and call this an acceptance or rejection accordingly.  The second adapts the preconditioning matrix $\CC$ following the proposal covariance adaptation outlined in \citet{haario2001adaptive}.  In practice, we find it useful to combine these two adaptation schemes when implementing the preconditioned simplicial sampler.

\subsubsection{Extra-dimensional simplicial sampling}

The final variation for the simplicial sampler uses a number of proposals that is larger than the dimensionality of the target distribution, i.e., $P>D$. Let $\mathbf{W}=\left(\mathbf{I}_{D\times D} ,\: \boldsymbol{0}_{D\times (P-D)}\right)$ be the wide $D \times P$ matrix for which multiplication selects the first $D$ elements of any $P$-vector. Letting each $\v_p$, $p\in\{1,\dots,P\}$, be a $P$-dimensional vector, the extra-dimensional simplicial sampler works as follows.
\begin{enumerate}
	\item Sample $\QQ\sim\haar(\orthog_P)$.
	\item Rotate, precondition and translate the vertices $(\v_1,\dots,\v_P,\boldsymbol{0}_P) \longmapsto  \QQ (\v_1,\dots,\v_P,\boldsymbol{0}_P) + (\TTheta^{(s)T},\boldsymbol{0}_{P-D}^T)^T =: (\TTheta_1^*,\dots,\TTheta_{P+1}^*)$.
	\item Draw a single sample $\mathbf{W}\TTheta^*_p$ from $\mathbf{W}(\TTheta^*_1,\dots,\TTheta_{P+1}^*)$ with probability proportional to $\pi(\mathbf{W}\TTheta^*_p)$.
	\item Set $\TTheta^{(s+1)} = \mathbf{W}\TTheta^*_{p}$.
\end{enumerate}
One may interpret the application of $\mathbf{W}$ as a form of preconditioning with preconditioner $\CC^{1/2}=\mathbf{W}$. In Section \ref{sec:pc}, we argue that the guarantees of Lemma \ref{lem:symm} apply to the preconditioned simplicial sampler because $\CC^{1/2}\QQ$ is a random rotation in Euclidean space with inner product $\langle \v, \v' \rangle_{\CC^{-1}} := \v^T\CC^{-1} \v'$.  Following this analogy, we have
\begin{align*}
	\CC=\mathbf{W}^T\mathbf{W}= 
	\begin{pmatrix}
		\mathbf{I}_D & \boldsymbol{0}_{D\times(P-D)} \\
		\boldsymbol{0}_{(P-D)\times D} & 	\boldsymbol{0}_{(P-D)\times (P-D)}
	\end{pmatrix} = \CC^{+} \, ,
\end{align*}
for $\CC^{+}$ the Moore-Penrose inverse \citep{holbrook2018differentiating}.  Due to the degeneracy of $\CC^{+}$, the resulting form  $\langle \v, \v' \rangle_{\CC^{+}} := \v^T\CC^{+} \v'$ is no longer an inner product, and we cannot assert that the extra-dimensional algorithm inherits the promises of  Lemma \ref{lem:symm}.
Nonetheless, limited empirical experiments do show the algorithm providing accurate results, and we find it helpful to visualize the sampler in Section \ref{sec:ed}.

\section{Computational considerations}

The real-world performance of an algorithm depends on its computational complexity, the language it is written in and the hardware it is implemented on.  It also depends on how well one codes it up, e.g., efficient use of random access memory and parallelization when available. We separate these considerations into multiproposal and acceptance mechanisms, and the exact weight one assigns to the two steps will depend on the computational complexity of the target distribution.

\subsection{Multiproposal mechanisms}

\subsubsection{The simplicial sampler multiproposal}\label{sec:ssMult}
The simplicial sampler requires the $O(D^3)$ generation of a random rotation $\QQ \sim \haar(\orthog_D)$.  The most common way to perform this step is to generate a $D\times D$ matrix of standard normals and then perform the QR factorization. Once one has the orthogonal `Q' matrix, one right multiplies it by a diagonal matrix consisting of the signs of the diagonal of the `R' matrix, guaranteeing that the obtained matrix $\QQ$ is the unique orthogonal matrix corresponding to the $\mathbf{R}$ with positive diagonal elements.  Thus, one obtains $\QQ \sim \haar(\orthog_D)$ using the $O(D^3)$ QR factorization.  There are additional algorithms for sampling from the Haar distribution over $\orthog_D$ that provide speedups while still maintaining cubic complexity \citep{stewart1980efficient,anderson1987generation}. In particular, \citet{stewart1980efficient} achieves a 50\% speedup over the QR factorization based algorithm by iteratively pre-multiplying random Householder matrices of increasing dimensions. We are interested in the fact that, when one only requires the random rotation and not the matrix itself, \citet{stewart1980efficient}'s algorithm only requires $O(D^2)$ floating point operations to randomly rotate a $D$-vector. On the one hand, the simplicial sampler multiproposal still requires $O(D^3)$ floating point operations because it must apply this random rotation to the $D$ $D$-vectors that identify the regular simplex.  On the other hand, one could apply this algorithm in an embarrassingly parallel manner to the $D$ non-zero $D$-vectors that contribute to the regular simplex. Furthermore, the entirety of the algorithm relies on matrix-matrix multiplications and should be extremely fast and scalable on a modern GPU: \citet{li2013gpu} use a (now outdated) Tesla C1060 GPU to score 10,000-fold speedups over a single-core $IJK$-algorithm implementation written in \textsc{C++}.  Such a fast GPU implementation of \citet{stewart1980efficient}'s algorithm requires significant engineering that is a contribution in its own right, so we focus on conventional implementations in the following.

Finally, even for serial implementations, the language of implementation can eclipse theoretical computational complexity. We find that an \textsc{R} based implementation of \citet{stewart1980efficient}'s algorithm is much slower than the QR factorization based method implemented in the \textsc{PRACMA} \citep{pracma} \textsc{R} package.  The QR decomposition in \textsc{R} calls the \textsc{Fortran} library \textsc{LAPACK} to perform the numerical linear algebra for the QR decomposition, and its precompiled and memory efficient implementation outperform interpreted \textsc{R}.

\subsubsection{Other multiproposals}

If one precomputes $\SSigma^{1/2}$, the multiproposals \eqref{eq:pInd} and \eqref{eq:p1} are of complexity $O(D^2P)$, with the majority of time spent multiplying the $D \times D$ square-root covariance $\SSigma^{1/2}$ by a $D \times P$ matrix of standard normals. If one adopts the preconditioning strategy of \citet{haario2001adaptive}, the complexity changes to $O(D^3+D^2P)$ accounting for a Cholesky decomposition. If one specifies $\SSigma = \sigma^2 \mathbf{I}_D$, then the complexity drops to $O(DP)$, and the new bottleneck becomes generation of i.i.d.~standard normals.

Just as for the simplicial sampler's random rotations, one may parallelize the matrix multiplications for these multiproposals.  One may also parallelize the standard normal generations using the parallel pseudo random number generators of \citet{salmon2011parallel} that \textsc{TensorFlow} uses \citet{abadi2016tensorflow} to great success.  

\subsection{Acceptance mechanisms}

The simplified acceptance mechanism \eqref{eq:simpProbs} requires $D$ target density evaluations.  These density evaluations are embarrassingly parallel, but there are some target distributions for which this kind of parallelization is not necessary.  When one jointly infers a multivariate Gaussian random variable and its covariance matrix using Metropolis-within-Gibbs, one must invert the updated covariance every time one updates the Gaussian variable. Once one has inverted the covariance matrix, evaluating the multivariate Gaussian density over $D$ proposals is almost as fast as evaluating the density over a single proposal.  That said, this intuition also applies to Metropolis-within-Gibbs with a simple random walk Metropolis update: having updated and inverted the covariance matrix, one may perform many Metropolis updates of the Gaussian random variable at relatively little additional cost.
	In general, however, careful parallelization should help accelerate the acceptance step for multiproposal methods in a way unavailable to single-proposal methods. For example, \citet{holbrook2021quantum} uses a quantum computer to achieve quadratic speedups that reduce the number of target evaluations from $\mathcal{O}(P)$ to $\mathcal{O}(\sqrt{P})$.

The simplified acceptance mechanism also avoids potentially costly computations of the $P+1$ proposal densities $q(\TTheta^*_p,\TTTheta^*)$ required for the general acceptance mechanism \eqref{eq:probsTjel}.  For example, the i.i.d.~multiproposal \eqref{eq:pInd} requires an additional $O(P^2)$ computational cost to obtain the terms
\begin{align*}
	q(\TTheta^*_p,\TTTheta^*)  = \prod_{p'\neq p} \exp \left( -\frac{1}{2}(\TTheta^*_p-\TTheta^*_{p'})^T \SSigma^{-1} (\TTheta^*_p-\TTheta^*_{p'})\right), \quad  p\in\{1,\dots, P+1\} \, .
\end{align*}
On the other hand, Lemma \ref{lem:gaussCenter} shows that the centered Gaussian multiproposal \eqref{eq:p1} leads to a simplified acceptance mechanism that results in prodigious theoretical speedups thoroughly discussed in Section \ref{sec:simp}.

\section{Results}\label{sec:results}

In the following, we use (PC-)Simpl, (PC-)RWM and (PC-)MTM to indicate the (preconditioned) simplicial sampler, random walk and multiple-try Metropolis algorithms, respectively.  Although PC-RWM is better known as the adaptive Metropolis algorithm of \citet{haario2001adaptive}, we use this notation to clarify comparisons.  ESS and ESSs denote effective sample size and effective sample size per second, respectively.  All experiments use the \textsc{R} language \citep{ihaka1996r} and the \textsc{R} package \textsc{pracma} \citep{pracma} to produce random rotations, which itself uses the QR decomposition of a matrix with Gaussian entries (Section \ref{sec:ssMult}). All MTM simulations use $D$ proposals for comparison. Finally, all results herein rely on a conventional single-core implementation.

\subsection{Empirically optimal scaling}

\begin{figure}[t!]
	\centering
	\includegraphics[width=\linewidth]{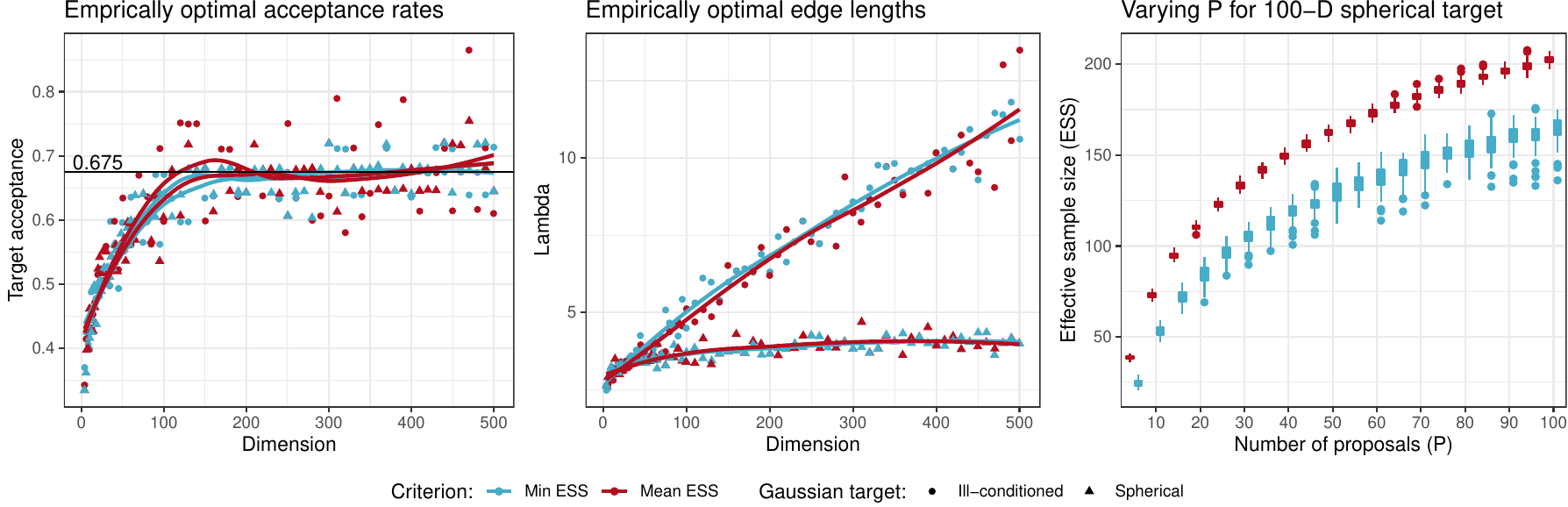}
	\caption{The left two figures present acceptance rates and simplex edge lengths that empirically maximize mean and minimum effective sample sizes (ESS) across target dimensions. For spherical and ill-conditioned targets, we use vanilla and preconditioned simplicial sampler, respectively. The right figure presents ESS results for a sampler modified to use a lower dimensional simplex.}\label{fig:optimalScaling}
\end{figure}

\begin{figure}[t!]
	\centering
	\includegraphics[width=\linewidth]{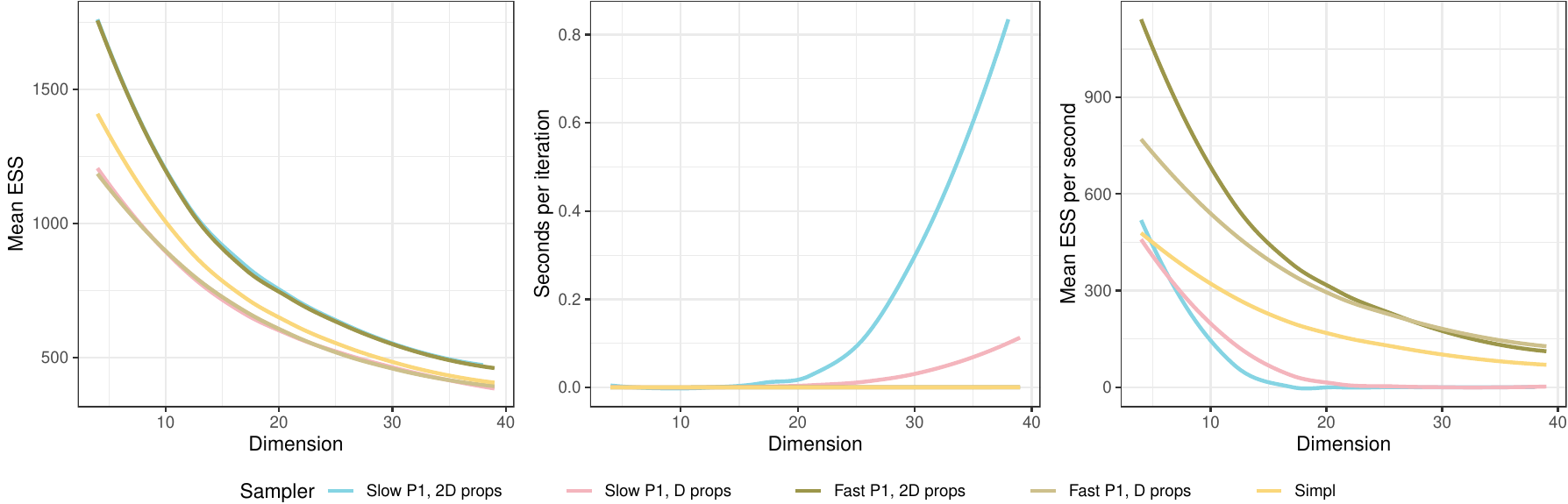}
	\caption{Performance of the simplicial sampler (Simpl) and the `P1' proposal mechanism \eqref{eq:p1}, with both $D$ and $2D$ proposals and both fast and slow implementations (Section \ref{sec:simp}).  All samplers target a standard multivariate Gaussian distribution, run for 10,000 iterations and adapt to a 50\% target acceptance rate.}\label{fig:p1p2}
\end{figure}

\subsubsection{Acceptance rates and edge lengths}
We first investigate Simpl and PC-Simple performance for Gaussian targets of varying dimensions (4 through 500) as a function of acceptance rate and simplex edge length.  For each dimension, we employ 20 independent MCMC runs of 10,000 iterations targeting one of 20 acceptance rates uniformly spaced between 0.2 and 0.95 (Section \ref{sec:adapt}).  The `ill-conditioned' Gaussian target has diagonal covariance with elements 1 through $D$.  
The left two panels of Fig.~\ref{fig:optimalScaling} plot the acceptance rates or edge lengths that maximize minimum and mean ESS out of the 20 targeted.  For both Simpl and PC-Simpl, optimal acceptance rates increase with dimensionality until flattening out at around 0.675. For Simpl, optimal edge lengths quickly flatten out at around 3. For PC-Simpl, optimal edge lengths grow with target dimension.

\subsubsection{Number of proposals}
In general, Simpl uses a simplex with $D+1$ nodes, the maximal number available in $D$ dimensions.  The right panel of Fig.~\ref{fig:optimalScaling} shows ESS results when lower dimensional simplices (with fewer nodes) are used to generate fewer proposals ($P<D$).  In practice, one may implement such a variation by replacing the initial set of pre-rotated vectors $(\v_1,\dots,\v_D,\boldsymbol{0})$ with $(\v_1,\dots,\v_P,\boldsymbol{0})$, again letting $P<D$. Algorithmic details otherwise remain the same.
We use 100 independent MCMC runs of 10,000 iterations each to generate minimum and mean ESS across target dimensions.  Both measures improve with greater numbers of proposals.  We do not compare ESSs because the additional cost of a simplex node arises from a mere $\order{D^2}$ matrix-vector multiplication, compared to the $\order{D^3}$ generation of a random rotation.

\subsection{Performance comparisons}

\begin{figure}[t!]
	\centering
	\includegraphics[width=\linewidth]{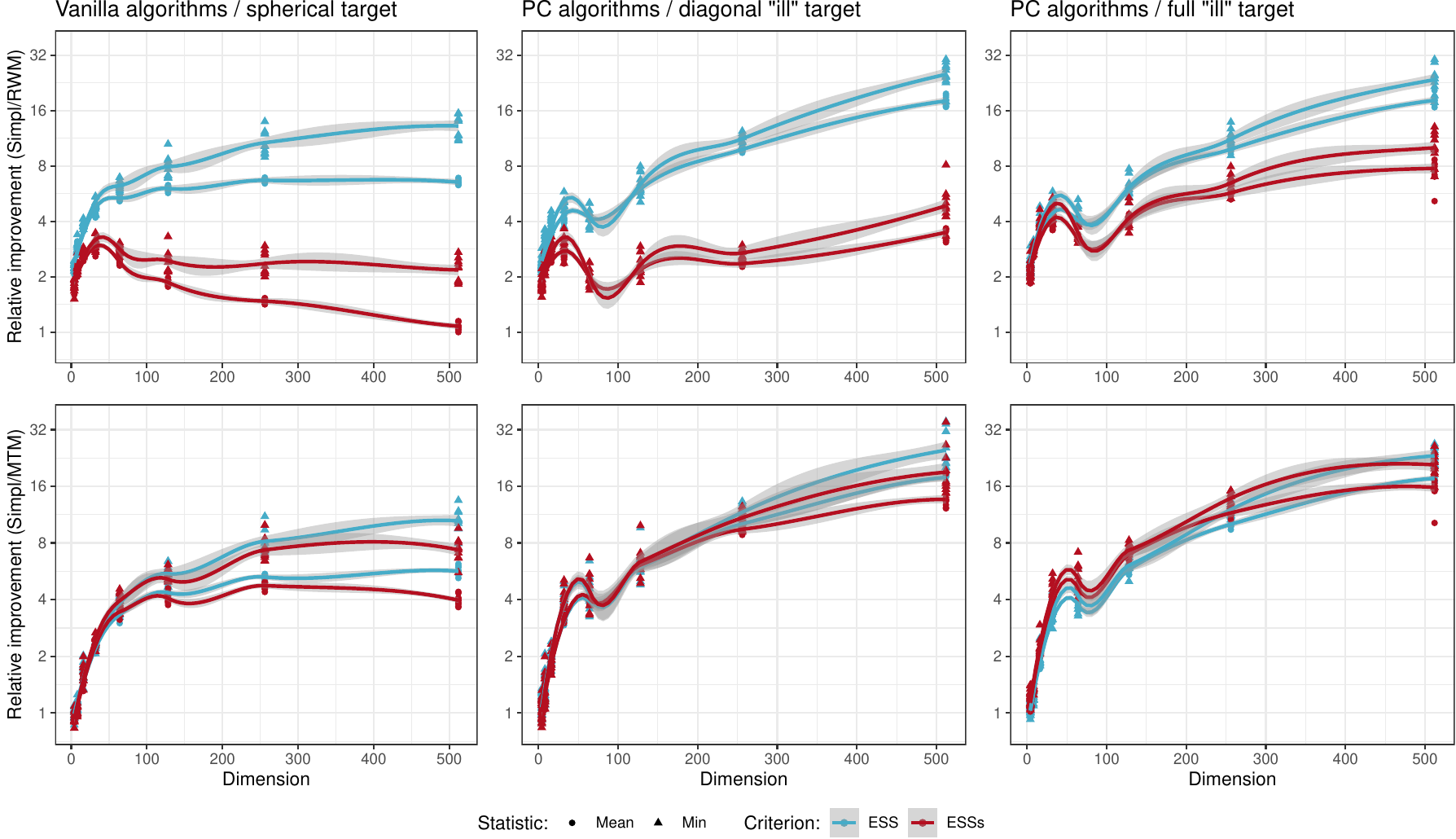}
	\caption{Relative improvement of simplicial sampler (Simpl) against random walk Metropolis (RWM) and multipiple-try Metropolis (MTM) as measured by effective sample size (ESS) and ESS per second (ESSs). We apply preconditioning (PC) in the exact same manner to all algorithms. }\label{fig:gaussianComp}
\end{figure}

\subsubsection{Accelerated centered Gaussian multiproposal}

Fig.~\ref{fig:p1p2} compares the simplicial sampler performance to that of the centered Gaussian sampler \eqref{eq:p1} using $P=D$ and $P=2D$ proposals. The `slow' Gaussian centered multiproposal calculates all $P+1$ proposal probabilities $q(\TTheta_p^*,\TTTheta^*)$, whereas the `fast' sampler uses the simplified acceptance probabilities \eqref{eq:simpProbs} following Lemma \ref{lem:gaussCenter}.  Samplers target a standard spherical normal distribution and a 50\% acceptance rate for 10,000 iterations.  In terms of ESS, doubling the number of proposals appears to benefit the centered Gaussian sampler, especially in lower dimensions.  The fast implementation of the centered Gaussian sampler already starts to outperform the slow implementations in lower dimensions, and this relative performance gain only increase with dimensions.  Across all dimensions, the ESSs trajectory of the simplicial sampler remains firmly in between those of the fast and slow centered Gaussian samplers.

\subsubsection{Gaussian targets}

We compare vanilla algorithms on spherical targets of dimensions $D=$4, 8, 16, 32, 64, 128, 256 and 512, all with identity for covariance matrix, and preconditioned algorithms on targets with ill-conditioned diagonal and full covariance matrices.  Ill-conditioned matrices all have spectra ranging uniformly from 1 to $D$. RWM and MTM both use optimal scaling of \citep{gelman1997weak}. For the latter, this results in an asymptotic 30\% acceptance rate as $D$ gets large \citep{bedard2012scaling}.  Simpl adapts to empirically optimal acceptance rates of Fig.~\ref{fig:optimalScaling}.  Fig.~\ref{fig:gaussianComp} shows relative ESS and ESSs results from 10 independent MCMC runs of 10,000 iterations each per algorithm.  Since ESS is a univariate measure, we report means and minima with respect to target dimensions.

\subsubsection{Bimodal target}

\begin{figure}[t!]
	\centering
	\includegraphics[width=\linewidth]{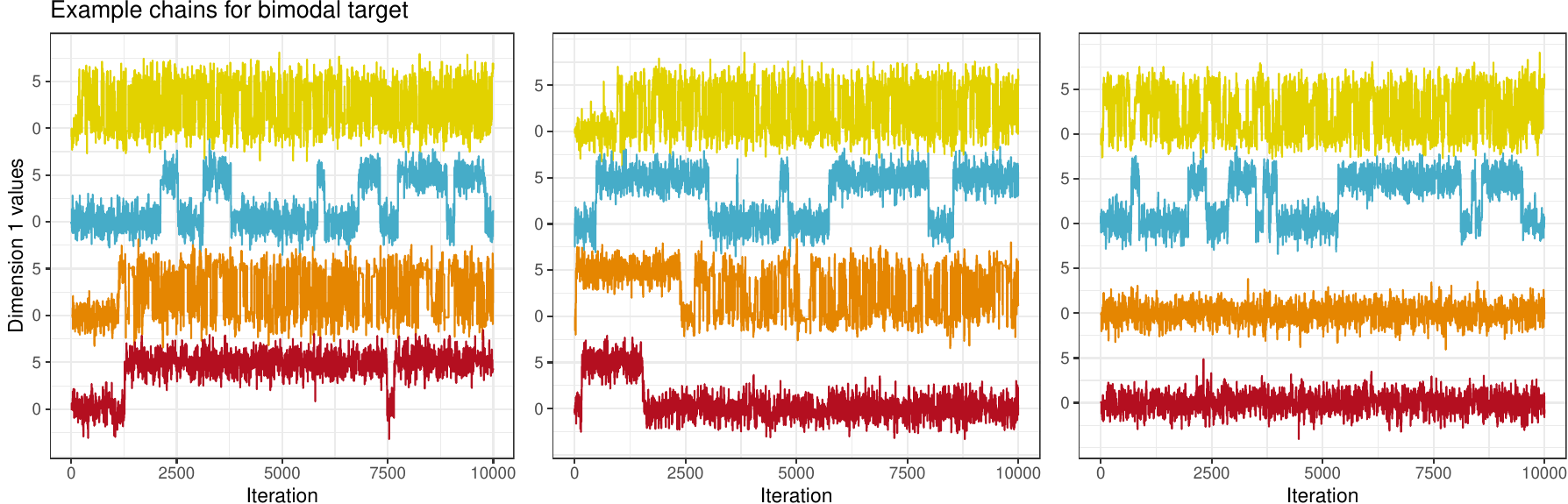}
	
	\vspace{0.5em}
	
	\includegraphics[width=0.7\linewidth]{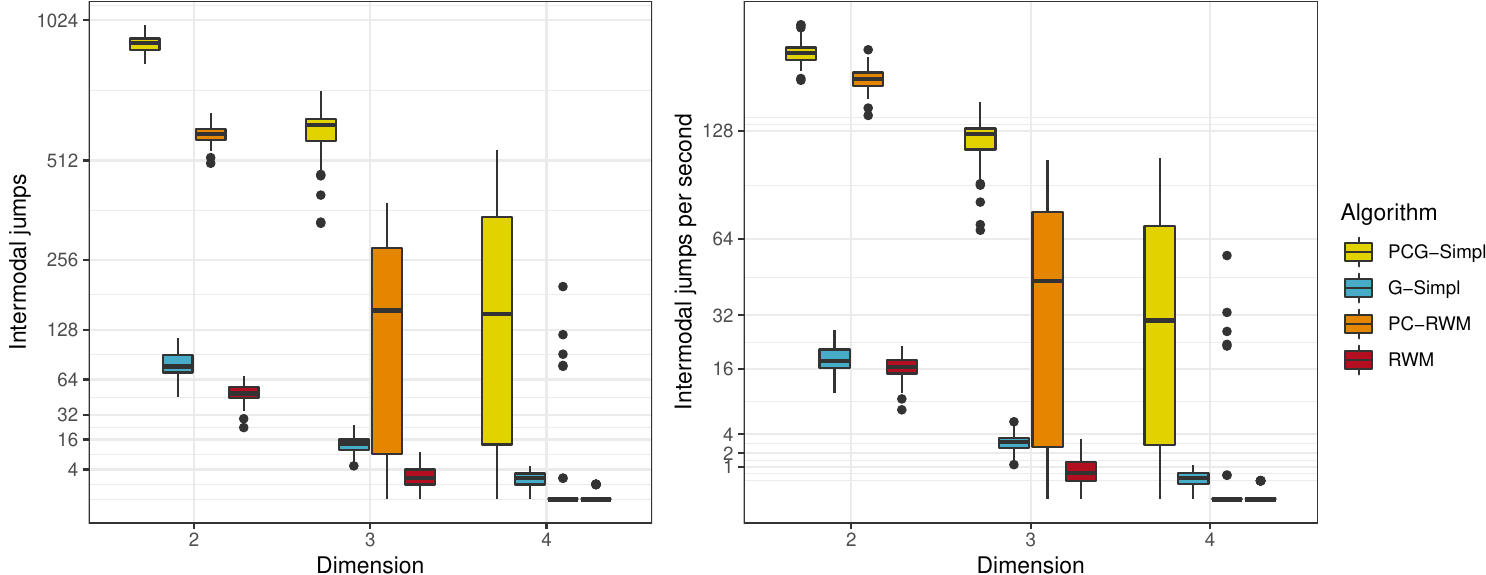}
	\caption{The top figures show stacked traceplots for the first dimension of a 3D Gaussian mixture target. The bottom figures show the number of intermodal jumps for 100,000 iterations presented over 100 independent MCMC runs. PCG- and G-Simpl are preconditioned and non-preconditioned Gaussian simplicial samplers compared to the same for random walk Metropolis (RWM). }\label{fig:bimodal}
\end{figure}

In general, Gaussian simplicial sampling (G-Simpl) (Section \ref{sec:gauss}) and its preconditioned form (PCG-Simpl) perform slightly worse ($10\%$ less ESS) than Simpl and PC-Simpl on unimodal targets, but G-Simpl and PCG-Simpl perform better on multimodal targets. We compare G-Simpl and PCG-Simpl to RWM and PC-RWM on mixture of Gaussian targets of dimensions 2, 3 and 4.  Mixture probabilities are $0.5$-$0.5$ and covariances are identity, but modal means are $\boldsymbol{0}$ and $\boldsymbol{5}= (5,\dots,5)$. Thus, modes become farther apart with increasing dimension.  Fig.~\ref{fig:bimodal} shows stacked traceplots and boxplots for the number of intermodal jumps recorded from 100 independent MCMC runs of 100,000 iterations each.

\subsubsection{Gaussian process classification}

\begin{table}[!t]
	\centering
	\caption{Effective sample sizes (ESS) and ESS per second (ESSs) from 100 independent MCMC runs of 100,000 iterations each for Gaussian process classification of 48 state election results from 2016 U.S. presidential election. Standard errors of empirical mean estimates accompany in parentheses. Simpl, RWM and MTM are vanilla simplicial sampler, random walk and multiple-try Metropolis algorithms.  PC implies preconditioning.  Its and time to $err=10$ represent iterations and seconds to achieve a misclassification total of 10, a rough measure of convergence speed. Winners in blue.}\label{tab:gp}
	\vspace{-1em}
	\resizebox{\textwidth}{!}{\begin{tabular}{@{}llllllll@{}}
			\toprule
			Algorithm & Mean ESS $\TTheta$  & Min ESS $\TTheta$  & Its to $err=10$  & ESS $\eta^2$  & ESS $\xi^2$  & ESS $\rho^2$  & ESS $\sigma^2$  \\ 
			\midrule
			Simpl & 1571.50  (15.6) & 256.27 (6.4)     & \cellcolor{trevorblue!15}73.05  (1.8) & 671.39  (10.2) & 2463.76  (76.6) & 3703.05  (50.0) & 452.30  (6.7) \\ 
			RWM & 436.34  (2.3) & 68.36  (2.3)       & 276.60  (8.3) & 258.63  (5.7) & 1407.88  (61.8) & 1837.47  (31.3) & 134.91  (2.3) \\ 
			MTM & 1195.22    (12.8) &     252.13       (7.3) &  1928.75   (20.5) & 676.75 (12.8)  &2332.81    (84.9) & 3490.12 (58.41) &366.86   (6.3) \\ 
			PC-Simpl &\cellcolor{trevorblue!15}1865.18  (14.4) & \cellcolor{trevorblue!15}1200.50  (20.4) & 76.39  (2.2) & \cellcolor{trevorblue!15}1312.98  (18.0) & \cellcolor{trevorblue!15}4583.21  (95.5) & \cellcolor{trevorblue!15}4222.10  (50.7) & \cellcolor{trevorblue!15}532.49  (6.5) \\ 
			PC-RWM & 362.99  (4.5) & 205.16  (5.3) & 284.27  (10.5) & 418.99  (10.9) & 1381.00  (37.6) & 1654.35  (29.8) & 125.12  (3.0) \\ 
			PC-MTM & 1256.07    (15.9)&      755.42    (16.4) &  1873.89   (21.1) & 951.97 (15.3) & 3840.86  (88.3) & 3736.27 (48.0) & 272.08   (6.3) \\
			\midrule
			& Mean ESSs $\TTheta$  & Min ESSs $\TTheta$  & Time to $err=10$  & ESSs $\eta^2$  & ESSs $\xi^2$  & ESSs $\rho^2$  & ESSs $\sigma^2$  \\ 
			\midrule
			Simpl & \cellcolor{trevorblue!15}5.62  (0.06) & 0.92  (0.02) & \cellcolor{trevorblue!15}15.87 (0.98) & 2.40  (0.04) & 8.81  (0.28) & \cellcolor{trevorblue!15}13.24  (0.18) & \cellcolor{trevorblue!15}1.62  (0.02) \\ 
			RWM & 2.50  (0.02) & 0.39  (0.01) & 145.41  (9.69) & 1.48  (0.03) & 8.09  (0.36) & 10.55  (0.19) & 0.77  (0.01) \\ 
			MTM & 3.85       (0.04) &    0.81      (0.02)& 11674.88  (252.45)& 2.18 (0.04) & 7.52   (0.27) & 11.25 (0.19) & 1.18 (0.02) \\ 
			PC-Simpl & 5.09  (0.13) & \cellcolor{trevorblue!15}3.30  (0.10) & 23.99  (1.46) & \cellcolor{trevorblue!15}3.58  (0.10) & \cellcolor{trevorblue!15}12.59  (0.43) & 11.53  (0.32) & 1.45  (0.04) \\ 
			PC-RWM & 1.44  (0.04) & 0.81  (0.03) & 268.77  (39.96) & 1.65  (0.06) & 5.48  (0.22) & 6.54  (0.21) & 0.49  (0.02) \\ 
			PC-MTM & 3.87       (0.05) &    2.33      (0.05)& 11540.43   (269.22)& 2.93 (0.05) & 11.85  (0.28) & 11.53 (0.15) & 0.84 (0.02) \\ 
			\bottomrule
	\end{tabular}}
\end{table}

48 states use a winner-take-all rule when assigning electoral college votes during the U.S. presidential election.  We use a Gaussian process (GP) with logit link \citep{williams1998bayesian} to regress binary outcomes (Trump or Clinton) over state center (latitude and longitude) and population data for the 2016 presidential election.  For latent variables $\TTheta=(\theta_1,\dots,\theta_{48})$, the prior covariance of $\theta_i$ and $\theta_j$ is
\begin{align*}
	\kappa (\x_i,\x_j) = \xi^2 + \eta^2 \exp\left(-\rho^2 ||\x_i-\x_j||_2^2 \right) + \sigma^2 \delta_{ij} \, ,
\end{align*}
where $\x_i$ and $\x_j$ are their respective predictor vectors. We infer latent variables $\TTheta$  using (PC-)Simple, (PC-)RWM and (PC-)MTM. We let Simpl algorithms target an acceptance rate of 0.5 following the results of Fig.~\ref{fig:optimalScaling}, while we let RWM, MTM and PC-MTM target 0.234, 0.3 and 0.4. The former value is optimal for RWM \citep{rosenthal2011optimal}.The last two values appear advantageous in preliminary tests.   We infer other parameters using univariate slice samplers \citep{neal2003slice}. Table \ref{tab:gp} shows results from 100 independent runs of 100,000 iterations each.  We initialize all chains to misclassify all states and use a total misclassification of 10 as rough convergence threshold.  While Simple and MTM require multiple target evaluations at each step, they only must compute the GP inverse covariance once for all such evaluations.  These results only establish effectiveness of the simplicial sampler relative to a few basic samplers and do not establish superiority relative to all samplers. Gradient-based MCMC methods could likely outperform the simiplicial sampler, but we abstain from such comparisons in order to focus on `apples-to-apples' comparisons among non-gradient methods.

\subsection{Visualizing the extra-dimensional simplicial sampler}\label{sec:ed}

Because of the diminishing returns (Fig.~\ref{fig:optimalScaling}, right) and $O(D^3)$ computational cost associated with increasing the number of simplex vertices, the extra-dimensional simplicial sampler is not practical in the absence of advanced parallelization techniques. Moreover, the rotated vertices no longer maintain equal distances from each other in the low-dimensional space over which the target is defined.  Fig.~\ref{fig:hyper} shows three successive iterations of the extra-dimensional simplicial sampler with 1000 proposals targeting a 2D Gaussian distribution. The sampler successfully moves to the high-density region of the space in these three steps.  We observe that rotating and projection the simplex vertices results in proposal point clouds that look plausibly Gaussian distributed, apparently lacking any geometric structure.

\begin{figure}[t!]
	\centering
	\includegraphics[width=\linewidth]{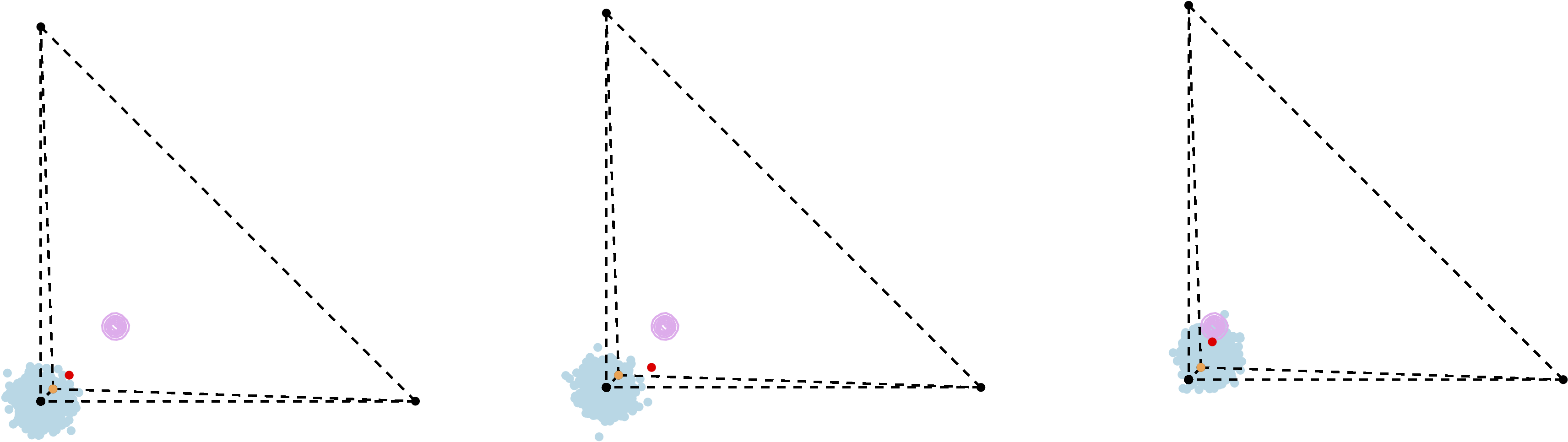}
	\caption{Successive iterations of an extra-dimensional simplicial sampler using 1000 proposals to sample a bivariate Gaussian target. For each step, the yellow dot is the initial position, the black dots are the projected un-rotated simplex vertices, the blue dots are the proposals (or projected rotated vertices), and the red dots are the accepted proposals. The dashed lines represent simplex edges, and the purple contours represent the target distribution. Finally, 998 unrotated vertices project down to the same bottom-left black point on the plane.}\label{fig:hyper}
\end{figure}

\section{Discussion}\label{sec:disc}

We have developed the simplicial sampler, a multivariate extension of the multiproposal mechanism \eqref{eq:p2} from \citet{tjelmeland2004using}, and proven that this extension maintains detailed balance using simplified acceptance probabilities. We have shown that a variation on this algorithm, the Gaussian simplicial sampler, maintains a unique stationary distribution and an appropriate law of large numbers.  Furthermore, we have proven that the centered Gaussian proposal mechanism \eqref{eq:p1} from \citet{tjelmeland2004using} enables the same simplified acceptance probability that leads to significant speedups.  We have thoroughly discussed the computational issues that one must consider when implementing these algorithms. Among others, these considerations include: theoretical computational complexity, hardware, programming language, efficient use of memory and parallelization.  Finally, we have thoroughly compared performance of sequential implementations on a collection of target distributions. That said, we note that these algorithms are well suited for parallelization enabled by, e.g., a GPU or a quantum computer \citep{holbrook2021quantum}.

We are interested in two lines of further investigation.  First, we have shown that two structurally distinct multiproposal mechanisms have symmetries that enable simplified acceptance probabilities.  Are these the only such mechanisms or are there others?  Can one combine these in ways that preserve the simplified acceptance probabilities?  Finally, do analogues for these methods exist in discrete settings, e.g., for lattice models?  Second, we are interested in the quality of convergence for the broader class of multiproposal MCMC algorithms.  Based on simulations, it would not be surprising for geometric ergodicity to hold for these algorithms. Moreover, one might expect for mixing speeds to improve as the number of proposals increases.

\section*{Acknowledgments}
\noindent
This work was supported by NIH grant K25 AI153816 and NSF grant DMS 2152774. We thank Nathan Glatt-Holtz, Cecilia Mondaini and Justin Krometis for helpful discussions and advice.

\appendix

\section{Empirical validation}\label{sec:empVal}

\begin{figure}[!t]
	\centering
	\includegraphics[width=\textwidth]{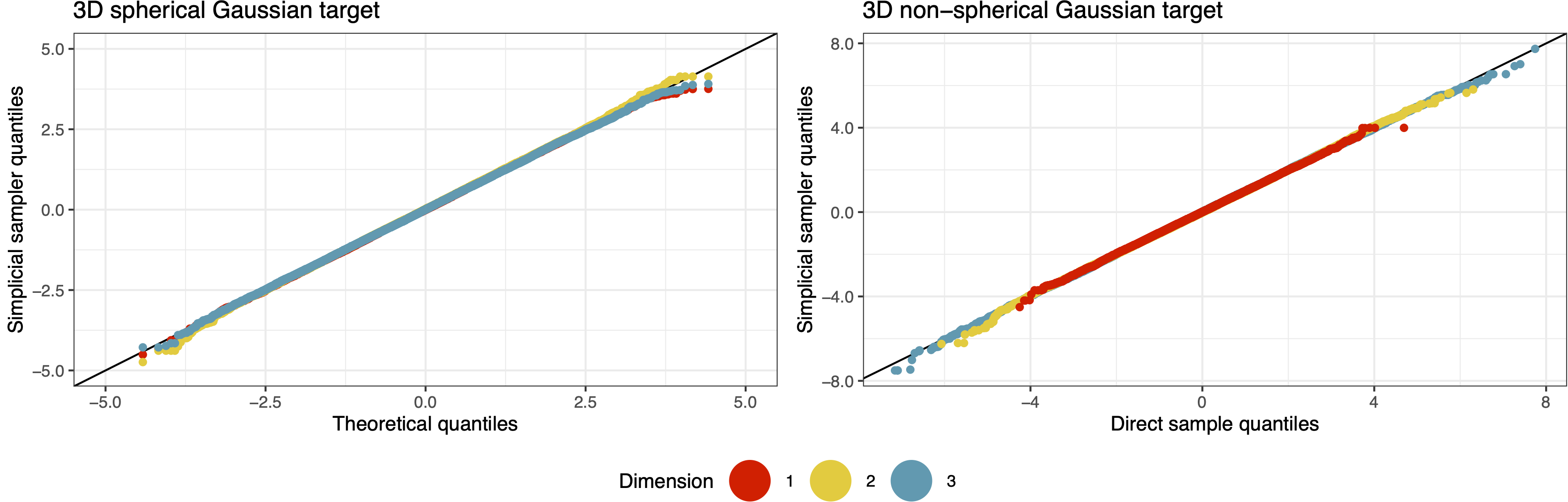}
	\caption{Quantile-quantile plots from 100,000 vanilla simplicial sampler iterations after removing 10,000 iterations as burn-in.}\label{fig:empVal}
\end{figure}

Fig.~\ref{fig:empVal} shows quantile-quantile (QQ) plots for the vanilla simplicial sampler for two different 3D Gaussian target distributions, one of which is spherical and the other of which has diagonal covariance with standard deviations 1, 2 and 3.  The QQ plots demonstrate that the empirical quantiles of the simplicial sampler adhere closely to the quantiles of respective target distributions.

\bibliographystyle{myjmva}
\bibliography{refs}

\begin{thebibliography}{32}
\expandafter\ifx\csname natexlab\endcsname\relax\def\natexlab#1{#1}\fi
\providecommand{\bibinfo}[2]{#2}
\ifx\xfnm\relax \def\xfnm[#1]{\unskip,\space#1}\fi
\bibitem[{Abadi et~al.(2016)Abadi, Agarwal, Barham, Brevdo, Chen, Citro,
  Corrado, Davis, Dean, Devin et~al.}]{abadi2016tensorflow}
\bibinfo{author}{M.~Abadi}, \bibinfo{author}{A.~Agarwal},
  \bibinfo{author}{P.~Barham}, \bibinfo{author}{E.~Brevdo},
  \bibinfo{author}{Z.~Chen}, \bibinfo{author}{C.~Citro}, \bibinfo{author}{G.~S.
  Corrado}, \bibinfo{author}{A.~Davis}, \bibinfo{author}{J.~Dean},
  \bibinfo{author}{M.~Devin}, et~al., \bibinfo{title}{Tensorflow: Large-scale
  machine learning on heterogeneous distributed systems},
  \bibinfo{journal}{arXiv preprint arXiv:1603.04467}  (\bibinfo{year}{2016}).
\bibitem[{Anderson et~al.(1987)Anderson, Olkin and
  Underhill}]{anderson1987generation}
\bibinfo{author}{T.~W. Anderson}, \bibinfo{author}{I.~Olkin},
  \bibinfo{author}{L.~G. Underhill}, \bibinfo{title}{Generation of random
  orthogonal matrices}, \bibinfo{journal}{SIAM Journal on Scientific and
  Statistical Computing} \bibinfo{volume}{8} (\bibinfo{year}{1987})
  \bibinfo{pages}{625--629}.
\bibitem[{Barker(1965)}]{barker1965monte}
\bibinfo{author}{A.~A. Barker}, \bibinfo{title}{Monte carlo calculations of the
  radial distribution functions for a proton? electron plasma},
  \bibinfo{journal}{Australian Journal of Physics} \bibinfo{volume}{18}
  (\bibinfo{year}{1965}) \bibinfo{pages}{119--134}.
\bibitem[{B{\'e}dard et~al.(2012)B{\'e}dard, Douc and
  Moulines}]{bedard2012scaling}
\bibinfo{author}{M.~B{\'e}dard}, \bibinfo{author}{R.~Douc},
  \bibinfo{author}{E.~Moulines}, \bibinfo{title}{Scaling analysis of
  multiple-try mcmc methods}, \bibinfo{journal}{Stochastic Processes and their
  Applications} \bibinfo{volume}{122} (\bibinfo{year}{2012})
  \bibinfo{pages}{758--786}.
\bibitem[{Borchers(2021)}]{pracma}
\bibinfo{author}{H.~W. Borchers}, \bibinfo{title}{pracma: Practical Numerical
  Math Functions}, \bibinfo{year}{2021}. \bibinfo{note}{R package version
  2.3.3}.
\bibitem[{Calderhead(2014)}]{calderhead2014general}
\bibinfo{author}{B.~Calderhead}, \bibinfo{title}{A general construction for
  parallelizing metropolis- hastings algorithms}, \bibinfo{journal}{Proceedings
  of the National Academy of Sciences} \bibinfo{volume}{111}
  (\bibinfo{year}{2014}) \bibinfo{pages}{17408--17413}.
\bibitem[{Coxeter(1973)}]{coxeter1973regular}
\bibinfo{author}{H.~S.~M. Coxeter}, \bibinfo{title}{Regular {P}olytopes},
  \bibinfo{publisher}{Courier Corporation}, \bibinfo{address}{New York},
  \bibinfo{year}{1973}.
\bibitem[{Folland(2016)}]{folland2016course}
\bibinfo{author}{G.~B. Folland}, \bibinfo{title}{A {C}ourse in {A}bstract
  {H}armonic {A}nalysis}, volume~\bibinfo{volume}{29}, \bibinfo{publisher}{CRC
  press}, \bibinfo{address}{Boca Raton, Florida}, \bibinfo{year}{2016}.
\bibitem[{Frenkel(2004)}]{frenkel2004speed}
\bibinfo{author}{D.~Frenkel}, \bibinfo{title}{Speed-up of monte carlo
  simulations by sampling of rejected states}, \bibinfo{journal}{Proceedings of
  the National Academy of Sciences} \bibinfo{volume}{101}
  (\bibinfo{year}{2004}) \bibinfo{pages}{17571--17575}.
\bibitem[{Gelman et~al.(1997)Gelman, Gilks and Roberts}]{gelman1997weak}
\bibinfo{author}{A.~Gelman}, \bibinfo{author}{W.~R. Gilks},
  \bibinfo{author}{G.~O. Roberts}, \bibinfo{title}{Weak convergence and optimal
  scaling of random walk {M}etropolis algorithms}, \bibinfo{journal}{The
  {A}nnals of {A}pplied {P}robability} \bibinfo{volume}{7}
  (\bibinfo{year}{1997}) \bibinfo{pages}{110--120}.
\bibitem[{Glatt-Holtz et~al.(2022)Glatt-Holtz, Holbrook, Krometis and
  Mondaini}]{glatt}
\bibinfo{author}{N.~E. Glatt-Holtz}, \bibinfo{author}{A.~J. Holbrook},
  \bibinfo{author}{J.~A. Krometis}, \bibinfo{author}{C.~F. Mondaini},
  \bibinfo{title}{Parallel mcmc algorithms: Theoretical foundations, algorithm
  design, case studies}, \bibinfo{year}{2022}.
\bibitem[{Haario et~al.(2001)Haario, Saksman and Tamminen}]{haario2001adaptive}
\bibinfo{author}{H.~Haario}, \bibinfo{author}{E.~Saksman},
  \bibinfo{author}{J.~Tamminen}, \bibinfo{title}{An adaptive {M}etropolis
  algorithm}, \bibinfo{journal}{Bernoulli} \bibinfo{volume}{7}
  (\bibinfo{year}{2001}) \bibinfo{pages}{223--242}.
\bibitem[{Hastings(1970)}]{hastings1970monte}
\bibinfo{author}{W.~K. Hastings}, \bibinfo{title}{{Monte Carlo sampling methods
  using Markov chains and their applications}}, \bibinfo{journal}{Biometrika}
  \bibinfo{volume}{57} (\bibinfo{year}{1970}) \bibinfo{pages}{97--109}.
\bibitem[{Holbrook(2018)}]{holbrook2018differentiating}
\bibinfo{author}{A.~Holbrook}, \bibinfo{title}{Differentiating the pseudo
  determinant}, \bibinfo{journal}{Linear Algebra and its Applications}
  \bibinfo{volume}{548} (\bibinfo{year}{2018}) \bibinfo{pages}{293--304}.
\bibitem[{Holbrook(2021)}]{holbrook2021quantum}
\bibinfo{author}{A.~J. Holbrook}, \bibinfo{title}{A quantum parallel markov
  chain monte carlo}, \bibinfo{journal}{arXiv preprint arXiv:2112.00212}
  (\bibinfo{year}{2021}).
\bibitem[{Ihaka and Gentleman(1996)}]{ihaka1996r}
\bibinfo{author}{R.~Ihaka}, \bibinfo{author}{R.~Gentleman},
  \bibinfo{title}{{R}: a language for data analysis and graphics},
  \bibinfo{journal}{{J}ournal of {C}omputational and {G}raphical {S}tatistics}
  \bibinfo{volume}{5} (\bibinfo{year}{1996}) \bibinfo{pages}{299--314}.
\bibitem[{Lee(2012)}]{Lee2012}
\bibinfo{author}{J.~M. Lee}, \bibinfo{title}{Quotient {M}anifolds},
  \bibinfo{publisher}{Springer New York}, \bibinfo{address}{New York, NY},
  \bibinfo{year}{2012}, pp. \bibinfo{pages}{540--563}.
\bibitem[{Li et~al.(2013)Li, Ranka and Sahni}]{li2013gpu}
\bibinfo{author}{J.~Li}, \bibinfo{author}{S.~Ranka},
  \bibinfo{author}{S.~Sahni}, \bibinfo{title}{Gpu matrix multiplication},
  \bibinfo{journal}{Multicore Computing: Algorithms, Architectures, and
  Applications} \bibinfo{volume}{345} (\bibinfo{year}{2013}).
\bibitem[{Liu et~al.(2000)Liu, Liang and Wong}]{liu2000multiple}
\bibinfo{author}{J.~S. Liu}, \bibinfo{author}{F.~Liang}, \bibinfo{author}{W.~H.
  Wong}, \bibinfo{title}{The multiple-try method and local optimization in
  metropolis sampling}, \bibinfo{journal}{Journal of the American Statistical
  Association} \bibinfo{volume}{95} (\bibinfo{year}{2000})
  \bibinfo{pages}{121--134}.
\bibitem[{Luo and Tjelmeland(2019)}]{luo2019multiple}
\bibinfo{author}{X.~Luo}, \bibinfo{author}{H.~Tjelmeland}, \bibinfo{title}{A
  multiple-try metropolis--hastings algorithm with tailored proposals},
  \bibinfo{journal}{Computational Statistics} \bibinfo{volume}{34}
  (\bibinfo{year}{2019}) \bibinfo{pages}{1109--1133}.
\bibitem[{Metropolis et~al.(1953)Metropolis, Rosenbluth, Rosenbluth, Teller and
  Teller}]{metropolis1953equation}
\bibinfo{author}{N.~Metropolis}, \bibinfo{author}{A.~W. Rosenbluth},
  \bibinfo{author}{M.~N. Rosenbluth}, \bibinfo{author}{A.~H. Teller},
  \bibinfo{author}{E.~Teller}, \bibinfo{title}{Equation of state calculations
  by fast computing machines}, \bibinfo{journal}{The {J}ournal of {C}hemical
  {P}hysics} \bibinfo{volume}{21} (\bibinfo{year}{1953})
  \bibinfo{pages}{1087--1092}.
\bibitem[{Neal(2003)}]{neal2003slice}
\bibinfo{author}{R.~M. Neal}, \bibinfo{title}{{Slice sampling}},
  \bibinfo{journal}{The Annals of Statistics} \bibinfo{volume}{31}
  (\bibinfo{year}{2003}) \bibinfo{pages}{705 -- 767}.
\bibitem[{Peskun(1973)}]{peskun1973optimum}
\bibinfo{author}{P.~H. Peskun}, \bibinfo{title}{Optimum monte-carlo sampling
  using markov chains}, \bibinfo{journal}{Biometrika} \bibinfo{volume}{60}
  (\bibinfo{year}{1973}) \bibinfo{pages}{607--612}.
\bibitem[{Rosenthal(2011)}]{rosenthal2011optimal}
\bibinfo{author}{J.~S. Rosenthal}, \bibinfo{title}{Optimal proposal
  distributions and adaptive mcmc}, \bibinfo{journal}{Handbook of Markov Chain
  Monte Carlo}  (\bibinfo{year}{2011}) \bibinfo{pages}{119--138}.
\bibitem[{Salmon et~al.(2011)Salmon, Moraes, Dror and
  Shaw}]{salmon2011parallel}
\bibinfo{author}{J.~K. Salmon}, \bibinfo{author}{M.~A. Moraes},
  \bibinfo{author}{R.~O. Dror}, \bibinfo{author}{D.~E. Shaw},
  \bibinfo{title}{Parallel random numbers: As easy as 1, 2, 3}, in:
  \bibinfo{booktitle}{Proceedings of 2011 International Conference for High
  Performance Computing, Networking, Storage and Analysis}, SC '11,
  \bibinfo{publisher}{Association for Computing Machinery},
  \bibinfo{address}{New York, NY, USA}, \bibinfo{year}{2011}.
\bibitem[{Schwedes and Calderhead(2021)}]{schwedes2021rao}
\bibinfo{author}{T.~Schwedes}, \bibinfo{author}{B.~Calderhead},
  \bibinfo{title}{Rao-{B}lackwellised parallel {MCMC}}, in:
  \bibinfo{editor}{A.~Banerjee}, \bibinfo{editor}{K.~Fukumizu} (Eds.),
  \bibinfo{booktitle}{Proceedings of The 24th International Conference on
  Artificial Intelligence and Statistics}, volume \bibinfo{volume}{130} of
  \text{\bibinfo{series}{Proceedings of Machine Learning Research}},
  \bibinfo{publisher}{PMLR}, \bibinfo{year}{2021}, pp.
  \bibinfo{pages}{3448--3456}.
\bibitem[{Stewart(1980)}]{stewart1980efficient}
\bibinfo{author}{G.~W. Stewart}, \bibinfo{title}{The efficient generation of
  random orthogonal matrices with an application to condition estimators},
  \bibinfo{journal}{SIAM Journal on Numerical Analysis} \bibinfo{volume}{17}
  (\bibinfo{year}{1980}) \bibinfo{pages}{403--409}.
\bibitem[{Tierney(1994)}]{tierney1994markov}
\bibinfo{author}{L.~Tierney}, \bibinfo{title}{Markov chains for exploring
  posterior distributions}, \bibinfo{journal}{The Annals of Statistics}
  \bibinfo{volume}{22} (\bibinfo{year}{1994}) \bibinfo{pages}{1701--1728}.
\bibitem[{Tjelmeland(2004)}]{tjelmeland2004using}
\bibinfo{author}{H.~Tjelmeland}, \bibinfo{title}{Using All Metropolis-Hastings
  Proposals to Estimate Mean Values (Norwegian University of Science and
  Technology, Trondheim, Norway)}, \bibinfo{type}{Technical Report}, Tech Rep
  4, \bibinfo{year}{2004}.
\bibitem[{Williams and Barber(1998)}]{williams1998bayesian}
\bibinfo{author}{C.~K. Williams}, \bibinfo{author}{D.~Barber},
  \bibinfo{title}{Bayesian classification with {G}aussian processes},
  \bibinfo{journal}{IEEE Transactions on Pattern Analysis and Machine
  Intelligence} \bibinfo{volume}{20} (\bibinfo{year}{1998})
  \bibinfo{pages}{1342--1351}.
\bibitem[{Yang et~al.(2018)Yang, Chen, Bernton and
  Liu}]{yang2018parallelizable}
\bibinfo{author}{S.~Yang}, \bibinfo{author}{Y.~Chen},
  \bibinfo{author}{E.~Bernton}, \bibinfo{author}{J.~S. Liu}, \bibinfo{title}{On
  parallelizable {M}arkov chain {M}onte {C}arlo algorithms with
  waste-recycling}, \bibinfo{journal}{Statistics and Computing}
  \bibinfo{volume}{28} (\bibinfo{year}{2018}) \bibinfo{pages}{1073--1081}.
\bibitem[{Yang and Liu(2021)}]{yang2021convergence}
\bibinfo{author}{X.~Yang}, \bibinfo{author}{J.~S. Liu},
  \bibinfo{title}{Convergence rate of multiple-try metropolis independent
  sampler}, \bibinfo{journal}{arXiv preprint arXiv:2111.15084}
  (\bibinfo{year}{2021}).

\end{thebibliography}

%
%

%
%


\end{document}